\setlist[itemize]{itemsep=1pt, topsep=2pt}
\setlist[enumerate]{itemsep=1pt, topsep=2pt}
\title{On Incremental Approximate Shortest Paths\\in Directed Graphs}
\date{}
\author{Adam Górkiewicz\thanks{University of Wrocław, Poland. Work partially done when the author was a student scholarship recipient at IDEAS NCBR supported by the National Science Centre (NCN) grant no. 2022/47/D/ST6/02184.} \and
Adam Karczmarz\thanks{University of Warsaw and IDEAS NCBR, Poland. Supported by the National Science Centre (NCN) grant no. 2022/47/D/ST6/02184.}}
\theoremstyle{plain}
\newtheorem{theorem}{Theorem}[section]
\newtheorem{lemma}[theorem]{Lemma}
\newtheorem{observation}[theorem]{Observation}
\newtheorem{definition}[theorem]{Definition}
\newtheorem{remark}[theorem]{Remark}
\newtheorem{invariant}[theorem]{Invariant}
\def\poly{\operatorname{poly}}
\def\polylog{\operatorname{polylog}}
\renewcommand{\O}{O}
\newcommand{\Ot}{\ensuremath{\widetilde{\O}}}
\newcommand{\eps}{\ensuremath{\epsilon}}
\newcommand{\dist}{\mathrm{dist}}
\newcommand{\wei}{w}
\newcommand{\set}[1]{\left\lbrace #1 \right\rbrace}
\newcommand{\eq}[1]{\begin{align*} #1 \end{align*}}
\DeclareMathOperator*{\rank}{rank}
\DeclareMathOperator*{\PD}{\textsc{Propagate}}
\newcommand{\rev}[1]{#1^{\textrm{R}}}
\begin{document}

\maketitle

\begin{abstract}
In this paper, we show new data structures maintaining approximate shortest paths in sparse directed graphs with polynomially bounded non-negative edge weights under edge insertions.

  We give more efficient incremental $(1+\eps)$-approximate APSP data structures that work against an adaptive adversary: a deterministic one with $\Ot(m^{3/2}n^{3/4})$\footnote{Throughout, for convenience we use the standard notation $\Ot(Y)$ as a shorthand for the expression $O(Y\polylog{n})$.} total update time and a randomized one with $\Ot(m^{4/3}n^{5/6})$ total update time.
For sparse graphs, these both improve polynomially upon the best-known bound against an adaptive adversary~\cite{KarczmarzL19}.
To achieve that, building on the ideas of~\cite{ChechikZ21,KyngMG22}, we show a near-optimal $(1+\eps)$-approximate incremental SSSP data structure for a special case when all edge updates are adjacent to the source, that might be of independent interest.

We also describe a very simple and near-optimal \emph{offline} incremental $(1+\eps)$-approximate SSSP data structure.
While online near-linear partially dynamic SSSP data structures have been elusive so far (except for dense instances), our result excludes using certain types of impossibility arguments to rule them out.
Additionally, our offline solution leads to near-optimal and deterministic all-pairs bounded-leg shortest paths data structure for sparse graphs.
\end{abstract}

\section{Introduction}
Computing shortest paths is one of the most classical algorithmic problems on graphs, with numerous applications in both theoretical and practical scenarios.
Dynamic variants of the shortest paths problem are undoubtedly among the most extensively studied dynamic graph problems.

In \emph{dynamic shortest paths} problems, the goal is to design a data structure that maintains (or provides efficient query access to) the desired information about the shortest paths in a graph $G=(V,E)$ subject to edge set updates.
If both edge insertions and deletions on $G$ are supported, the data structure is called \emph{fully dynamic}.
If only edge insertions or only edge deletions are allowed, it is called \emph{incremental} or \emph{decremental}, respectively.
Incremental and decremental settings are together referred to as \emph{partially dynamic}.
In the fully dynamic setting, one seeks data structures with low (amortized) update time, whereas in partially dynamic settings, the typical goal is to optimize total update time, i.e., the time needed to process an entire sequence of updates.

While we would ideally like to have efficient fully dynamic data structures, their existence\footnote{Or their feasibility without resorting to complex and impractical tools such as fast matrix multiplication~\cite{AbboudW14}.} is often ruled out by conditional lower bounds (e.g.,~\cite{AbboudW14, AnconaHRWW19, JinX22}).
Partially dynamic scenarios are often more approachable algorithmically, while still being very useful in applications (such as max-flow, e.g.,~\cite{BernsteinGS21, Madry10}), including designing the said fully dynamic data structures.

When analyzing randomized dynamic algorithms, one needs to specify the assumptions about how an adversary may adapt to the data structure's outputs for the data structure to work correctly and efficiently.
An \emph{adaptive} adversary can adjust the sequence of updates and queries it issues freely based on the preceding data structure's outputs.
An \emph{oblivious} adversary has to fix the entire update sequence before the process starts (but without revealing it upfront).
Adaptive data structures are more desirable; e.g., they can be applied in a black-box manner when designing static algorithms.

\subsection{Incremental shortest paths}
In this paper, we focus on \emph{incremental} shortest paths data structures for weighted \emph{directed} graphs.

\paragraph{Exact data structures.}
Even in incremental weighted digraphs, maintaining shortest paths \emph{exactly} is computationally challenging.
As proved recently by~\cite{SahaWX025}, one cannot even exactly maintain the distance between a \emph{single} fixed source-target pair in an incremental graph within $O(m^{2-\eps})$ total update time (for any density $m$), i.e., significantly faster than recomputing from scratch.
The lower bound is conditional on the Min-Weight 4-Clique Hypothesis (eg.~\cite{AbboudWW14, BackursT17, LincolnWW18}) and holds even in the \emph{offline} setting, that is, when the update sequence is revealed upfront.
A folklore observation is that the all-pairs shortest paths (APSP) matrix can be updated after an edge insertion in $O(n^2)$ time, and hence this yields an incremental APSP data structure with $O(mn^2)$ total update time.
This is optimal if the $n^2$ distances are to be maintained explicitly.

Non-trivial exact data structures are known for the special case of unweighted digraphs. Single-source shortest paths (SSSP) in unweighted digraphs can be maintained in $O(nm)$ total time~\cite{EvenS81,HenzingerK95}, whereas APSP can be maintained in $\Ot(n^3)$ total time~\cite{AusielloIMN92}.
Both these bounds are near-optimal if the distances are to be maintained explicitly.
There is also an $n^{2-o(1)}$ lower bound for maintaining a single source-target distance in an incremental sparse digraph~\cite{GutenbergWW20}. 

\paragraph{Approximate SSSP.} The lack of non-trivial approaches for exactly maintaining shortest paths in incremental weighted digraphs motivates studying approximate data structures.
For SSSP, the classical ES-tree~\cite{EvenS81,HenzingerK95} can be generalized to partially dynamic weighted digraphs with real weights in $[1,W]$ at the cost of $(1+\eps)$-approximation (where $\eps=O(1)$) and has $\Ot(mn\log(W))$ total update time (see, e.g.,~\cite{Bernstein09, Bernstein16}).
While quadratic in the general case, the approximate ES-tree is rather flexible and might be set up to run much faster -- in $\Ot(mh\log(W))$ total time -- if guaranteed that shortest (or short enough) paths from $s$ use at most $h$ hops.

Henzinger, Krinninger and Nanongkai~\cite{HenzingerKN14,HenzingerKN15} were the first to break through the quadratic bound of the ES-tree polynomially, also in the decremental setting, albeit only against an oblivious adversary.
Probst Gutenberg, Vassilevska Williams, and Wein~\cite{GutenbergWW20} showed the first SSSP data structure designed specifically for the incremental setting against an adaptive adversary, with $\Ot(n^2\log(W))$ total update time.
Their data structure is deterministic and near-optimal for dense graphs.
Chechik and Zhang~\cite{ChechikZ21} showed two data structures for sparse graphs: one deterministic with $\Ot(m^{5/3}\log(W))$ total update time and another Monte Carlo randomized with $\Ot((mn^{1/2}+m^{7/5})\log(W))$ total update time, correct w.h.p.\footnote{With high probability, that is, with probability $1-n^{-c}$ for any desired constant $c\geq 1$.} against an adaptive adversary.
These two bounds were subsequently improved to $\Ot(m^{3/2}\log(W))$ (deterministic) and $\Ot(m^{4/3}\log(W))$ (randomized) respectively by Kyng, Meierhans and Probst Gutenberg~\cite{KyngMG22}.

Very recently,~\cite{CKLMG24} gave a deterministic $(1+\eps)$-approximate incremental min-cost flow algorithm implying that a $(1+\eps)$-approximate shortest path for a \emph{single} source-target pair in polynomially weighted digraphs can be maintained in almost optimal $m^{1+o(1)}$ time.

No non-trivial conditional lower bounds for incremental approximate SSSP have been described and thus the existence of an SSSP data structure with near-linear total update time remains open.

\paragraph{Approximate APSP.} Bernstein~\cite{Bernstein16} showed a Monte Carlo randomized data structure maintaining $(1+\eps)$-approximate APSP in either incremental or decremental setting with $\Ot(mn\log(W))$ total update time.
This bound is near-optimal (even for incremental transitive closure~\cite{HenzingerKN15}), but the data structure gives correct answers w.h.p. only against an oblivious adversary.
A deterministic $(1+\eps)$-approximate data structure with $\Ot(n^3\log(W))$ total update time is known~\cite{KarczmarzL20}.
This is near-optimal for dense graphs~\cite{HenzingerKN15}. \cite{KarczmarzL19} showed a deterministic data structure with $\Ot(mn^{4/3}\log^2(W))$ total update time that is more efficient for sparse graphs.

Interestingly, the incremental APSP data structures for sparse digraphs~\cite{Bernstein16,KarczmarzL19} are both obtained without resorting to any of the critical ideas from the state-of-the-art sparse incremental SSSP data structures~\cite{KyngMG22}.
Instead, they work by running approximate ES-trees on carefully selected instances where approximately shortest paths from the source have sublinear numbers of hops.
Note that a deterministic bound $\Ot(nm^{3/2}\log(W))$ and a randomized adaptive bound $\Ot(nm^{4/3}\log(W))$ could be achieved by simply employing the two data structures of \cite{KyngMG22} for all possible sources $s\in V$.
This black-box application does not improve upon~\cite{KarczmarzL19},~though.
It is thus interesting to ask whether the data structures of~\cite{ChechikZ21,KyngMG22}, just like the ES-tree, could be adjusted to run faster in some special cases that are useful in designing all-pairs data structures.

\subsection{Our contribution}
In this paper, we give new incremental $(1+\eps)$-approximate shortest paths data structures for sparse directed graphs.
With the goal of obtaining improved all-pairs data structures, we first identify a useful special case of the incremental SSSP problem for which a data structure with near-linear total update time is achievable.
Building on the ideas from~\cite{ChechikZ21,KyngMG22}, we prove:
\begin{theorem}[Simplified version of Theorem~\ref{t:source-sssp}]\label{t:source-sssp-intro}
Let $\eps\in (0,1)$. Let $G=(V,E)$ be a digraph with edge weights in ${\{0\}\cup [1,W]}$ and a source~$s\in V$. There exists a deterministic data structure explicitly maintaining distance estimates $d:V\to\mathbb{R}_{\geq 0}$ satisfying
\[ \dist_G(s,v)\leq d(v)\leq (1+\eps)\cdot\dist_G(s,v) \]
for all $v\in V$ and only supporting insertions (or weight decreases) of \emph{source edges} $e=sv$, $v\in V$.

The total update time of the data structure
  is $O(m\log{(nW)}\log^2(n)/\eps+\Delta)$, where $m$ is the final number of edges in~$G$
and $\Delta$ is the total number of updates issued.
\end{theorem}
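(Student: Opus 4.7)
The plan is to design a Dijkstra-like priority-queue algorithm that exploits the restricted update model: every update touches $s$ and distances are monotonically non-increasing. Distance estimates are discretized into $O(\log(nW)/\eps)$ geometric $(1+\eps)$-levels, which caps the number of ``interesting'' events per vertex.

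\textbf{Algorithm.} Maintain $d\colon V\to\mathbb{R}_{\geq 0}\cup\{\infty\}$ with $d(s)=0$ and $d(v)=\infty$ otherwise, and a min-heap $Q$. On the insertion of a source edge $sv$ of weight $w$, call $\textsc{Relax}(v,w)$: set $d(v)\leftarrow\min(d(v),w)$, and if this causes the discretized level $\lceil\log_{1+\eps}d(v)\rceil$ to drop, push $(v,d(v))$ onto $Q$. The main loop drains $Q$ in min-priority order: pop $(u,x)$, discard if $x\ne d(u)$ (stale); otherwise, for each out-edge $uv'$, call $\textsc{Relax}(v',x+w(u,v'))$. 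Crucially, $d(v')$ is always tightened to the minimum relaxation it receives, while the queue push (i.e., propagation) is triggered \emph{only} by a level drop.

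\textbf{Correctness.} By induction on $\dist_G(s,v)$, one argues that after $Q$ is drained, $d(v)\leq(1+\eps)\dist_G(s,v)$. Pick a shortest $s$-$v$ path and its penultimate vertex $u^*$; its final value $d(u^*)\leq(1+\eps)\dist_G(s,u^*)$ was set at some point, triggering a push for $u^*$ at its lowest level. When that entry was popped, it issued $\textsc{Relax}(v,d(u^*)+w(u^*,v))$, and since $d(v)$ takes the minimum of all relaxations received, it is bounded by $(1+\eps)\dist_G(s,u^*)+w(u^*,v)\leq(1+\eps)\dist_G(s,v)$. Level-triggered propagation ensures this last relaxation for $v$ is actually issued after the final level of $u^*$ stabilizes.

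\textbf{Runtime.} Each vertex $v$ crosses a $(1+\eps)$-level boundary downward at most $O(\log(nW)/\eps)$ times, since $d(v)\in\{0\}\cup[1,nW]\cup\{\infty\}$. Each crossing triggers $O(\deg(v))$ relaxations on out-edges, each involving at most $O(\log n)$ priority-queue overhead. Summed over vertices, this contributes $O(m\log(nW)\log(n)/\eps)$. Additional $O(\log n)$ bookkeeping per level update (e.g., for rounded-log computation and auxiliary bucket structures, following the framework of \cite{ChechikZ21,KyngMG22}) gives the claimed $O(m\log(nW)\log^2(n)/\eps)$ total; the $O(\Delta)$ term absorbs the dispatch of insertions that cause no level drop.

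\textbf{Main obstacle.} The delicate point is the correctness argument. A naive scheme that both updates $d(v)$ and propagates only on $(1+\eps)$-significance fails: one can exhibit a path graph and a sequence of small source-edge weight decreases that leaves $d$-values at successor vertices stale by unbounded factors, violating the $(1+\eps)$ guarantee. The fix — taken from \cite{ChechikZ21,KyngMG22} — is to decouple the two roles: always tighten $d(v)$ to the minimum received relaxation, but propagate (push to $Q$) only on level drops. This keeps total work bounded by the level-crossing count while ensuring, via the Dijkstra-order processing of $Q$, that the tightest propagation from each predecessor always reaches its successors. Zero-weight edges are handled via a standard preprocessing step contracting weight-$0$ SCCs.
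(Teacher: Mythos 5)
Your algorithm---Dijkstra-like propagation where $d(v)$ is always tightened but a push to $Q$ is triggered only on a $(1+\eps)$-level drop---is essentially the procedure $\PD$ of \Cref{alg:propagate-additive}, and your argument correctly establishes the analogue of \Cref{slack-inv}: after the queue drains, every edge $uv$ satisfies $d(v)\leq (1+\eps)\,d(u)+\wei(uv)$, because the last relaxation for $v$ from $u$ used a value of $d(u)$ in the same geometric level as the current one. But that is a \emph{per-edge} guarantee, and your correctness paragraph quietly promotes it to a global one. The sentence ``its final value $d(u^*)\leq(1+\eps)\dist_G(s,u^*)$ was set at some point, triggering a push for $u^*$ at its lowest level'' is where the argument breaks: setting the final value of $d(u^*)$ need not trigger a push (only a level-crossing does). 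So the last relaxation received by $v$ from $u^*$ used some value $\tilde d(u^*)$ with $\tilde d(u^*)\leq(1+\eps)d(u^*)$, giving only $d(v)\leq (1+\eps)d(u^*)+\wei(u^*,v)$. Plugging in the inductive hypothesis $d(u^*)\leq(1+\eps)\dist(s,u^*)$ yields $d(v)\leq(1+\eps)^2\dist(s,u^*)+\wei(u^*,v)$, which is \emph{not} bounded by $(1+\eps)\dist(s,v)$ unless $\wei(u^*,v)\geq (1+\eps)\dist(s,u^*)$. The induction does not close, and iterating it along a shortest path of hop length $h$ gives only $d(v)\leq(1+\eps)^{h}\dist(s,v)$, which is meaningless for $h=\Theta(n)$ without shrinking $\eps$ by a factor of $n$ (destroying the running time).

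This compounding-error phenomenon is precisely what the paper (following \cite{ChechikZ21,KyngMG22}) has to work hard to control, and the mechanism for doing so is absent from your proposal. The paper introduces $k$-certificates (\Cref{def:cert}), shows that a $\PD$ call can degrade a vertex's certificate level by at most one (\Cref{recertification_theorem}), maintains per-vertex ranks tracking these levels, and runs the $\textsc{Synchronize}$ routine (\Cref{alg:synchronize}) to repeatedly re-propagate and merge rank classes so that the maximum rank stays at most $\log_2 m$ (\Cref{l_is_log}). Only then does \Cref{l:rank} give a bound of $(1+\eps')^{1+\log_2 m}$, and only then can one set $\eps'=\Theta(\eps/\log m)$ to obtain the claimed $1+\eps$ approximation. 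This rescaling of the accuracy parameter is also the true source of the second $\log n$ factor in the stated running time; attributing it to ``auxiliary bucket bookkeeping'' is a further sign the key structural idea is missing. The ``fix'' you describe as coming from \cite{ChechikZ21,KyngMG22}---always tighten but propagate on level drops---is not actually their contribution; the rank/certificate/synchronization machinery is, and without it the theorem does not follow.
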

Even though the repertoire of possible updates in~\Cref{t:source-sssp-intro} seems very limited, we remark that in the exact setting, explicitly maintaining distances under source updates like this requires $\Omega(mn)$ total time since one could reduce static APSP to a sequence of $O(n)$ updates of this kind.\footnote{This is precisely the argument used in~\cite{RodittyZ11} to reduce static APSP to partially dynamic SSSP.}

We also show (\Cref{l:arbitrary-extension}) that without compromising the total update time,~\Cref{t:source-sssp-intro} can be extended -- at the cost of slight accuracy decline -- to also accept arbitrary edge insertions, as long as they do not improve the distances from the source too much.
With this observation, we can combine~\Cref{t:source-sssp-intro} with the near-optimal APSP data structure for dense digraphs~\cite{KarczmarzL20} and obtain improved incremental APSP data structure for sparse graphs against an adaptive adversary:
\begin{restatable}{theorem}{tallpairs}\label{t:all-pairs}
Let $G=(V,E)$ be a digraph with edge weights in $\{0\}\cup [1,W]$ and let $\eps\in (0,1)$. There exist data structures explicitly
maintaining distance estimates $d:V\times V\to\mathbb{R}_{\geq 0}$ such that
\[ \dist_G(u,v)\leq d(u,v)\leq (1+\eps)\cdot\dist_G(u,v) \]
for all $u,v\in V$,
subject to edge insertions or weight decreases issued to $G$:
\begin{itemize}
  \item A deterministic one with $O(m^{3/2}n^{3/4}\log^2(nW)\log(n)/\eps^{3/2}+\Delta)$ total update time.
  \item A Monte Carlo randomized one that produces correct answers against an adaptive adversary (whp.)
    with $O(m^{4/3}n^{5/6}\log^3(nW)\log(n)/\eps^{7/3}+\Delta)$ total update time.
\end{itemize}
Here, $m$ denotes the final number of edges in $G$ and $\Delta$ the total number of updates issued.
\end{restatable}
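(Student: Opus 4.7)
The plan is to reduce incremental APSP to a hitting-set/pivoting combination of incremental SSSP subproblems, where the SSSP from each pivot is implemented cheaply by exploiting \Cref{t:source-sssp-intro} together with its arbitrary-edge extension~\Cref{l:arbitrary-extension}. The resulting bound is then taken in minimum with the $\Ot(n^3)$ dense APSP data structure of~\cite{KarczmarzL20} to cover the dense regime.

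Fix a hop threshold $h$ to be optimized, and maintain a hitting set $H\subseteq V$ of size $\Ot(n/h)$ meeting every approximately shortest path with more than $h$ edges. Deterministically, $H$ is constructed by an epoch-based greedy procedure on a sampled family of candidate paths, in the style of~\cite{KarczmarzL19}; in the randomized case, $H$ is a uniform random sample refreshed on geometric epochs to remain valid against an adaptive adversary. From every source $s \in V$, maintain a $(1+\eps)$-approximate $h$-hop incremental ES-tree, contributing $\Ot(nmh/\eps)$ in aggregate. From every pivot $p\in H$, maintain an unrestricted $(1+\eps)$-approximate incremental SSSP: deterministically using the $\Ot(m^{3/2})$ data structure of~\cite{KyngMG22}, and in the randomized adaptive case their $\Ot(m^{4/3})$ data structure, giving $\Ot(|H|\cdot m^{\alpha})$ aggregate pivot cost, with $\alpha\in\{3/2,\,4/3\}$.

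The new ingredient, and the main technical point, is implementing each per-pivot SSSP so that it genuinely costs $\Ot(m^{\alpha})$: edge insertions incident to the pivot $p$ are served in amortized near-linear time by~\Cref{t:source-sssp-intro}, and insertions that do not substantially improve any distance from $p$ are absorbed by~\Cref{l:arbitrary-extension}; only the remaining insertions that cause a $(1+\eps)$-scale decrease in some $\dist(p,\cdot)$ need to invoke the heavier~\cite{KyngMG22} pipeline. Since the number of such significant events per pivot is $\Ot(n\log(W)/\eps)$ by the usual $(1+\eps)$-scale distance-decrease counting, the $m^{\alpha}$ per-pivot target follows. A query $(u,v)$ is answered by $\min\bigl(d^{(h)}_u(v),\,\min_{p\in H}\,d^{(h)}_u(p)+d_p(v)\bigr)$, and is correct by the hitting-set property, since any approximately shortest $u \to v$ path with more than $h$ hops splits at some pivot within $h$ hops of $u$.

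Balancing $nmh$ against $(n/h)\cdot m^{\alpha}$ by choosing $h = m^{1/2}/n^{1/4}$ in the deterministic case and $h = m^{1/3}/n^{1/6}$ in the randomized case yields total times $\Ot(m^{3/2}n^{3/4}/\eps^{3/2})$ and $\Ot(m^{4/3}n^{5/6}/\eps^{7/3})$ respectively; taking the minimum with the $\Ot(n^3)$ APSP of~\cite{KarczmarzL20} handles the dense regime. The main obstacle is the per-pivot SSSP accounting: one must carefully verify that~\Cref{l:arbitrary-extension} truly absorbs ``most'' non-source insertions at negligible cost, and that the few significant updates fit within the~\cite{KyngMG22} amortization without double-counting. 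A secondary challenge, present only in the randomized variant, is maintaining the hitting-set property of $H$ against an adaptive adversary, which we address by geometric resampling and absorb the rebuild cost into the $\Delta$ term.
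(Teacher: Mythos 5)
Your proposal takes a fundamentally different route from the paper, and unfortunately the route has real gaps. The paper does \emph{not} use a hitting set, pivots, or hop-bounded ES-trees. Instead, it maintains a single-source data structure of \Cref{t:source-sssp} from \emph{every} vertex $s\in V$ on a shortcut-augmented graph $G_s = G_{\mathrm{beg}} + \{e_{s,u}=su : u\in V\}$ (Bernstein-style shortcuts, in both directions), and the whole point of~\Cref{t:source-sssp-intro} is that decreasing shortcut weights is a \emph{source} update and hence free. Real edge insertions are buffered into phases of length $b$; within a phase, a fresh dense APSP data structure of~\cite{KarczmarzL20} is run on the $O(b)$ endpoints $V_{\mathrm{cur}}$ of inserted edges and its outputs feed the shortcut weights of all the $\mathcal{D}_s,\mathcal{D}^R_t$. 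At the end of each phase, $E_{\mathrm{cur}}$ is batch-inserted into all the single-source structures via~\Cref{l:arbitrary-extension}; the quantitative precondition of that lemma is exactly what~\Cref{l:apsp-accuracy} is verifying. Since each phase bumps the rank offset by $O(\log m)$, a reset every $p$ phases (deterministic recompute, or the randomized reset of~\Cref{t:randomized-reset}) keeps the error bounded. Balancing $b=\sqrt{n}$ and $p$ against the per-phase dense-APSP cost, the per-phase batch-insert cost, and the reset cost gives the stated bounds.

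Two concrete problems with your plan. First, the ``new ingredient'' paragraph is incoherent: once you fall back on~\cite{KyngMG22} for the insertions that genuinely decrease $\dist(p,\cdot)$, you pay its $\Ot(m^{\alpha})$ total no matter how few such events there are, and feeding the remaining insertions through~\Cref{t:source-sssp-intro}/\Cref{l:arbitrary-extension} does not reduce that. Also, \Cref{l:arbitrary-extension} is not a free pass for ``insertions that do not substantially improve distances'': it requires the \emph{current} estimates to already be $(1+\xi)^\alpha$-accurate for $G+F$, and using it grows a rank offset that must be managed with resets --- you never address either. Second, the hitting-set property is not maintained against an adaptive adversary by ``geometric resampling''; making this rigorous is exactly the obstruction the paper is circumventing by \emph{not} using a hitting set, and the deterministic pivot maintenance of~\cite{KarczmarzL19} does not port the way you assume (it also only achieves the weaker $\Ot(mn^{4/3})$). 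As a sanity check, your own formula $nmh + (n/h)m^{\alpha}$ balances at $h=m^{(\alpha-1)/2}$ giving $\Ot(nm^{(\alpha+1)/2}) = \Ot(nm^{5/4})$, which is strictly \emph{better} than the claimed $m^{3/2}n^{3/4}$ whenever $m>n$; the fact that you picked $h=m^{1/2}/n^{1/4}$ to make the answer come out right, rather than optimizing, is a strong signal that the cost accounting is missing something (most likely the cost of maintaining the hitting set and of maintaining the $n^2$ minima $\min_p d^{(h)}_u(p)+d_p(v)$ explicitly as estimates change).
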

For sparse graphs, the deterministic variant improves upon the total update time of the data structure of~\cite{KarczmarzL19} by a factor of $n^{1/12}$.
By employing the randomized technique of preventing error buildup due to~\cite{ChechikZ21}, the speed-up over~\cite{KarczmarzL19} is increased to a factor of $n^{1/6}$.

We believe that~\Cref{t:source-sssp-intro} and its extensions may find other applications in the future. 

\paragraph{Offline data structures.} Finally, we also prove that the $(1+\eps)$-approximate incremental SSSP problem has a near-linear deterministic \emph{offline} solution. Formally, we show:

\begin{restatable}[]{theorem}{restateOfflineTheorem}\label{t:offline}
Let $\eps\in (0,1)$. Let $G=(V,E)$ be a digraph with edge weights in $\{0\}\cup [1,W]$ and a source $s\in V$.
Suppose we are given a sequence of $\Delta$ incremental updates to $G$ (edge insertions or weight decreases). Let $G^0,\ldots,G^\Delta$ denote the subsequent versions of the graph $G$.

Then in
$\O(m \log(n)\log(nW) \log^2(\Delta) / \eps + \Delta)$
deterministic time one can compute a data structure~supporting queries $(v,j)$ about a $(1+\eps)$-approximate estimate of $\dist_{G^j}(s,v)$. The query time is $\O(\log(\log(nW)\log(\Delta)/\eps))$.
For example, if $\eps,W,\Delta\in \poly(n)$, the query time is $O(\log\log{n})$.
\end{restatable}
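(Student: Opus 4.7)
The plan is to employ the source-edge SSSP data structure of Theorem~\ref{t:source-sssp-intro} as a black-box computational engine, simulating general incremental edge updates via carefully scheduled source-edge insertions. I first record the structural observation that $j \mapsto \dist_{G^j}(s,v)$ is non-increasing in $j$ for every fixed $v$ and lies in $\{0\}\cup[1,nW]$, so at most $O(\log(nW)/\eps)$ distinct $(1+\eps)$-approximate values appear across all $\Delta+1$ versions. The query side therefore reduces to a per-vertex predecessor search over a short time-indexed history of distance-change events; a $y$-fast trie (or van Emde Boas layer) keyed on the resulting bounded universe answers $(v,j)$ queries in the promised $\tilde{O}(\log\log(\cdot))$ time.

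For the construction, I instantiate a single source-SSSP data structure $\mathcal{D}$ from Theorem~\ref{t:source-sssp-intro}, taking $G^0$ as its fixed non-source graph, and I also maintain the current adjacency $G^{\text{curr}}$ as a side-structure as updates stream in. The invariant is that $\mathcal{D}$'s exposed estimate $D(v)$ is always a $(1+\eps)$-approximate upper bound on $\dist_{G^{\text{curr}}}(s,v)$. Two triggers feed source-edge updates into $\mathcal{D}$: whenever a user update to $G^{\text{curr}}$ inserts or decreases an edge $uv$ to weight $w$ with $D(u)+w < D(v)/(1+\eps)$, I feed $\mathcal{D}$ the source-edge $s\to v$ of weight $D(u)+w$; whenever $\mathcal{D}$ reports a drop in $D(u)$, I scan the out-edges of $u$ in $G^{\text{curr}}\setminus G^0$ and, for each such $uv$ with $D(u)+w(uv) < D(v)/(1+\eps)$, I inject the corresponding source edge (edges already in $G^0$ are propagated internally by $\mathcal{D}$). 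Every $D$-change is also appended to the appropriate vertex's event history.

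The amortization is clean: each vertex $v$ can host at most $O(\log(nW)/\eps)$ distance-drop events by the approximate monotone-level structure, so the out-edge scans cost $O(m\log(nW)/\eps)$ in total, and the number of source-edge updates fed to $\mathcal{D}$ is of the same order, giving $\mathcal{D}$-time $\tilde{O}(m\log(n)\log(nW)/\eps)$ by Theorem~\ref{t:source-sssp-intro}. Correctness follows by induction on the number of post-$G^0$ edges along an approximate shortest path in $G^j$: the trigger rule ensures that whenever the ``entering vertex'' of such a post-$G^0$ edge has $D$ stabilized at its correct approximate value, a source edge of suitable weight is injected and then propagated through $G^0$ by $\mathcal{D}$, composing correctly across alternations of $G^0$ and post-$G^0$ edges.

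The main obstacle is the correctness of the post-$G^0$ scan rule when $D(u)$ undergoes several decreases in close succession and interleaved with other vertices' updates: one must verify that the out-edge scans are ultimately triggered at $u$'s correct-for-the-current-graph approximate value, not merely a transient intermediate value. A natural fix is to re-scan $u$ only once $D(u)$ has dropped by a $(1+\eps)$-factor since its last scan, which $\mathcal{D}$ should intrinsically support via its internal $(1+\eps)$-level discretization. A secondary concern is matching the precise $\log^2(\Delta)$ factor in the total time bound: this likely arises once the cost of maintaining per-vertex event-histories and the predecessor structures across all $\Delta$ updates is rolled into the accounting, together with the rescaling $\eps'=\Theta(\eps/\log\Delta)$ needed if any error compounding occurs along chains of source-edge triggers.
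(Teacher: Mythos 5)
Your approach is genuinely different from the paper's, but it contains a fundamental gap that I don't think can be patched.

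The paper's proof of Theorem~\ref{t:offline} does not go through Theorem~\ref{t:source-sssp-intro} at all. It uses a divide-and-conquer over the \emph{timeline}: starting from exact distances at times $0$ and $\Delta$, a recursive $\textproc{Search}(\alpha,\beta)$ call picks the midpoint $\gamma$, identifies the set $X$ of vertices whose already-stored ``before-$\alpha$'' and ``after-$\beta$'' estimates differ by more than a $(1+\xi)$ factor, and runs one Dijkstra on $G^\gamma[X]$ with an auxiliary super-source encoding the known boundary estimates, then recurses on $[\alpha,\gamma-1]$ and $[\gamma+1,\beta]$. The $(1+\xi)$ error compounds once per recursion \emph{level}, so setting $\xi = \Theta(\eps/\log\Delta)$ suffices, and each vertex is touched in only $O(\log(nW)\log(\Delta)/\xi)$ calls.

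The gap in your simulation is error accumulation along \emph{chains of source-edge injections}, and the chain length is not $O(\log\Delta)$ as you hoped. Consider a shortest $s \to t$ path in $G^{\mathrm{curr}}$ that alternates $L$ times between $G^0$-segments and post-$G^0$ edges. At each alternation, the weight you feed into $\mathcal{D}$ is $D(u) + w(uv)$, which already carries the multiplicative error accumulated at $u$; then $\mathcal{D}$ adds its own $(1+\eps)$ (really $(1+\epsp)^{O(\log m)}$) when propagating through the next $G^0$-segment, plus the $(1+\eps)$-slack in your trigger threshold. So the additive errors compound over the $L$ alternations, giving an approximation factor of roughly $(1+\eps)^{\Theta(L)}$. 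Since $L$ can be as large as $\Theta(\min(n,\Delta))$ (start with $G^0$ equal to a star of heavy source edges and insert a cheap $s \to v_1 \to v_2 \to \cdots \to v_n$ path one edge at a time), you would need to set the internal accuracy to $\eps' = \Theta(\eps/n)$, inflating the total time to $\Ot(mn/\eps)$ and losing the claimed bound. Notice also that your construction never actually uses the offline nature of the update sequence: if it achieved $(1+\eps)$-approximation in $\Ot(m/\eps)$ time it would give a near-linear \emph{online} incremental SSSP data structure, which the paper explicitly highlights as an open problem; that alone should signal that something essential is missing. The paper's own APSP data structure (Section~\ref{sec:all-pairs}) does use a shortcut-injection mechanism much like yours, but is forced to interleave it with expensive periodic resets precisely to control this error buildup, which is why it does not achieve near-linear time.

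Your query-side idea (per-vertex short history of $(1+\eps)$-level changes plus a $y$-fast trie) is fine and is close in spirit to the paper's per-vertex balanced BSTs indexed by timestamp.
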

Note that incremental and decremental settings are equivalent in the offline scenario, so the above works for decremental update sequences as well.
To the best of our knowledge, no prior\footnote{An independent and parallel work~\cite{arxiv-independent} established a similar result on offline incremental SSSP.}  work described offline partially dynamic $(1+\eps)$-approximate SSSP algorithms. It is worth noting that for dense graphs, the best known online data structures~\cite{BernsteinGW20, GutenbergWW20} already run in near-optimal $\Ot(n^2\log(W))$ time.
However, the state-of-the-art partially dynamic SSSP bounds~\cite{BernsteinGW20, KyngMG22, GutenbergWW20} are still off from near-linear by polynomial factors.

\Cref{t:offline} implies that if incremental or decremental approximate SSSP happen to not have near-linear algorithms, then showing a (conditional) lower bound for the problem requires exploiting the online nature of the problem.
For example, showing reductions from such problems as the Min-Weight $k$-Clique or $k$-cycle detection (as used in~\cite{SahaWX025} and~\cite{GutenbergWW20}, respectively, for exact shortest path) cannot yield non-trivial lower bounds for partially dynamic SSSP.

Offline incremental APSP and its special cases have been studied before under the name \emph{all-pairs shortest paths for all flows} (APSP-AF)~\cite{DuanR18, ShinnT14}, or -- in a special case when edges are inserted sorted by their weights -- \emph{all-pairs bounded-leg shortest paths} (APBLSP)~\cite{BoseMNSZ04, DuanP08, RodittyS11}.
For polynomially weighted digraphs, all known non-trivial data structures are $(1+\eps)$-approximate.
Duan and Ren~\cite{DuanR18} described a deterministic data structure with $\Ot(n^{(\omega+3)/2}\log(W))$ preprocessing and $O(\log\log(nW))$ query time.\footnote{Here, $\omega$ denotes the matrix multiplication exponent. Currently, it is known that $\omega\leq 2.372$ \cite{AlmanDWXXZ25}.}
For sparser graphs with polynomial weights, the randomized partially dynamic data structure of~\cite{Bernstein16} (augmented with persistence~\cite{DriscollSST89} to enable answering queries about individual graph's versions) has $\Ot(mn)$ total update time and remains the state of the art, also for APBLSP.
By applying~\Cref{t:offline} for each source, one obtains an APSP-AF data structure that is much simpler, deterministic, and a log-factor faster than~\cite{Bernstein16}.

Finally, let us remark that even though, as presented in this paper, all the data structures we develop only maintain or output distance estimates, they can be easily extended to enable reporting a requested approximately shortest path $P$ in near-optimal $\Ot(|P|)$ time.

\subsection{Further related work}
There is a broad literature on dynamic shortest paths data structures for weighted digraphs in fully dynamic (e.g.,~\cite{AlokhinaB24, BrandN19, Bernstein16, ChechikZ23, DemetrescuI04, Mao24a}) and decremental (e.g., \cite{Bernstein16, BernsteinGS20, BernsteinGW20, KarczmarzL19}) settings.
Much of the previous (and in particular very recent) work has been devoted to shortest paths data structures for undirected graphs, often with larger approximation factors, e.g.,~\cite{Bernstein09, BernsteinGS21, Chechik18, DoryFNV24, ForsterGNS23, forsterNG23, HaeuplerLS24, KyngMG24}.

\subsection{Organization}
In~\Cref{sec:prelims} we set some notation. Sections~\ref{sec:propagate}~and~\ref{sec:source-insertions} together contain the description of the data structure for source-incident insertions and its extensions.
In~\Cref{sec:all-pairs} we give our incremental APSP data structure.
Finally,~\Cref{sec:offline} describes the offline incremental SSSP data structure.

\section{Preliminaries}\label{sec:prelims}
In this paper, we only deal with non-negatively weighted digraphs $G=(V,E)$.
We write $e=uv$ to denote a directed edge from its tail $u$ to its head $v$. We denote by $\wei(e)$ the weight of edge~$e$.
By $G+F$ we denote the graph $G$ with edges $F$ added.
We also write $G+e$ to denote $G+\{e\}$.

We define a path $P=s\to t$ in $G$ to be a sequence of edges $e_1\ldots e_k$, where $u_iv_i=e_i\in E$, such that $v_i=u_{i+1}$, $u_1=s$, $v_k=t$.
If $s=t$, then $k=0$ is allowed.
Paths need not be simple, i.e., they may have vertices or edges repeated.
We often view paths as subgraphs and write~$P\subseteq G$.

The weight (or length) $\wei(P)$ of a path $P$ equals the sum $\sum_{i=1}^k\wei(e_i)$.
We denote by $\dist_G(s,t)$ the weight of the shortest $s\to t$ path in $G$. 
If the graph in question in clear from context, we sometimes omit the subscript and write $\dist(s,t)$ instead.

\section{Dijkstra-like propagation}\label{sec:propagate}
\newcommand{\epsp}{\xi}
\newcommand{\Vinput}{V_{\mathrm{input}}}
\newcommand{\Vtouch}{V_{\mathrm{touched}}}
In this section we define and analyze properties of a Dijkstra-like procedure $\PD$ whose variant has been
used by~\cite{ChechikZ21,KyngMG22} for obtaining state-of-the-art incremental SSSP data structures. We will use that in our specialized incremental SSSP data structures in Section~\ref{sec:source-insertions}.

Let $G=(V,E)$ be a weighted directed graph with edge weights in $\{0\}\cup[1,W]$.
Let $\epsp\in (0,1)$ be an accuracy parameter, and let us put $\ell:=\lceil{\log_{1+\epsp}(nW)}\rceil$.

Let $d:V\to \{0\}\cup [1,nW]$ be some \emph{estimate vector}. We would like the estimates to satisfy and maintain -- 
subject to edge insertions (or weight decreases) issued to $G$ --
the following invariant:
\begin{invariant}\label{slack-inv}
For every edge $uv=e\in E$, $d(v)\leq (1+\epsp)(d(u)+\wei(e))$.
\end{invariant}
The basic purpose of the procedure $\PD$ 
(see Algorithm~\ref{alg:propagate-additive}) is to decrease some of the estimates $d(\cdot)$ so that~\Cref{slack-inv}
(potentially broken by edge insertions) is again satisfied.

Roughly speaking, $\PD$ propagates estimate updates similarly to Dijkstra's algorithm, but only decreases a given estimate if the new value is smaller than the old one by a factor at least $1+\epsp$ or if the vertex it corresponds to is currently in the queue.
For $\Vinput \subseteq V$, $\PD(\Vinput)$ initializes the queue with vertices from the set $\Vinput$.
It returns a subset $\Vtouch \subseteq V$ of vertices that have been explored (i.e., inserted to the queue) at any point during the procedure's run.

\begin{observation}\label{obs:insert}
Suppose an edge  $e=uv$ is inserted to $G$. If $d(v)>(1+\epsp)(d(u)+\wei(e))$,~then setting $d(v):=d(u)+\wei(e)$ followed by
calling $\PD(\{v\})$ ensures \Cref{slack-inv} is satisfied.
\end{observation}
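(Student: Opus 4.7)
The plan is to separate the analysis into two parts: first, identify which edges can possibly violate \Cref{slack-inv} after the initial assignment $d(v):=d(u)+\wei(e)$; second, show that the propagation call $\PD(\{v\})$ heals every such violation.

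First I would partition the edges of $G+e$ into three classes: (a) the inserted edge $e=uv$ itself; (b) edges of the form $vw$ outgoing from $v$; and (c) all remaining edges. Since the assignment only modifies $d(v)$, class (c) continues to satisfy \Cref{slack-inv} by the pre-existing invariant, and class (a) satisfies it because $d(v)=d(u)+\wei(e)\leq (1+\epsp)(d(u)+\wei(e))$. Hence only class (b) is at risk immediately after the assignment, which is precisely why the $\PD$ queue is initialized with $v$.

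Next I would argue by induction on the pop order of $\PD$ that, at the moment a vertex $x$ is popped for the last time, no outgoing edge $xw$ violates \Cref{slack-inv}. The key quantitative observation is that a violation $d(w)>(1+\epsp)(d(x)+\wei(xw))$ is equivalent to the candidate value $d(x)+\wei(xw)$ being smaller than $d(w)$ by a factor exceeding $1+\epsp$. Hence the update rule's factor-$(1+\epsp)$ clause fires on every genuine violation encountered during the scan at $x$: $d(w)$ is decreased and $w$ is enqueued, eliminating the violation on $xw$.

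The main obstacle is ruling out that a later decrease of $d(x)$---triggered by a predecessor in the propagation---could reintroduce a violation on some edge $xw$ that was already processed during $x$'s scan. This is where the ``or the vertex is currently in the queue'' clause of the update rule is essential: as long as $x$ remains in the queue, any candidate improvement of $d(x)$ is applied immediately, even if it is less than a factor-$(1+\epsp)$ drop; combined with the Dijkstra-style extract-min discipline, $x$ is popped only with its final $d$-value, and the scan examines outgoing edges against that value. Consequently the inductive claim carries through to the termination of $\PD$, by which point the queue is empty and every edge across classes (a), (b), and (c) satisfies \Cref{slack-inv}. Termination itself follows by a routine potential argument: estimates lie in a finite set and each accepted decrease outside of the in-queue clause shrinks some $d(\cdot)$ by a factor of at least $1+\epsp$, while in-queue decreases can happen only between the enqueueing and popping of the same vertex.
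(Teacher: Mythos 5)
Your overall approach is sound and matches the natural argument for this observation (which the paper states without proof): after the initial assignment only $v$'s outgoing edges can be at risk, and a Dijkstra-style analysis of $\PD$ shows it repairs those without reintroducing violations elsewhere. Three small imprecisions are worth fixing. First, the inductive claim ``at the moment a vertex $x$ is popped for the last time, no outgoing edge $xw$ violates~\Cref{slack-inv}'' is false as stated: when $v$ itself is popped, its outgoing edges may still violate the invariant, and the repair happens only \emph{during the scan} of $x$'s outgoing edges; the claim should instead be phrased as ``after the scan at $x$ completes (equivalently, by the time $\PD$ terminates), no outgoing edge of $x$ violates the invariant.'' Second, you cite only the factor-$(1+\epsp)$ clause as the mechanism that eliminates a violation, but when $w$ is already in the queue it is the decrease-key branch that fires; that branch also repairs the violation, since it sets $d(w)\le d(x)+\wei(xw)\le(1+\epsp)(d(x)+\wei(xw))$, so both branches should be mentioned. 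Third, your closing sentence jumps to ``every edge across classes (a), (b), (c) satisfies~\Cref{slack-inv}'' while the induction covers only the outgoing edges of vertices that were actually popped; you need the additional (easy) observation that for any edge $xw$ with $x\notin\Vtouch$, $d(x)$ is never modified during $\PD$ while $d(w)$ can only decrease, so the invariant for such edges---which you already verified immediately after the assignment---persists to the end. With these small repairs the argument is complete and correct, and it essentially reproves, in the special case $\Vinput=\{v\}$, the monotonicity facts that the paper later packages into the proof of~\Cref{lem:no_slack}.
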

\begin{algorithm}[b]
\caption{$\PD(\Vinput)$}\label{alg:propagate-additive}
\begin{algorithmic}[1]
  \State Initialize a priority queue $Q=\Vinput$ keyed with $d(\cdot)$, and a set $\Vtouch:=\Vinput$.
  \While{$Q\neq \emptyset$}
    \State $u:=Q.\text{pop}()$
    \For{$uv=e\in E$}
      \If{$v\in Q$}
        \State $d(v):=\min(d(v),d(u)+\wei(e))$ \Comment{decrease key}
      \ElsIf{$d(v)>(1+\epsp)(d(u)+\wei(e))$}
        \State $d(v):=d(u)+\wei(e)$
        \State Insert $v$ to $Q$ and $\Vtouch$
      \EndIf
    \EndFor
  \EndWhile
  \Return $\Vtouch$
\end{algorithmic}
\end{algorithm}
Our data structures later on will rely on the following stronger property of $\PD$.

\begin{restatable}{lemma}{lemnoslack}\label{lem:no_slack}
After running $\PD(\Vinput)$, for every edge $uv=e\in E$ such that \linebreak ${u, v \in \Vtouch}$, $e$ is \emph{relaxed}, i.e., we have $d(v)\leq d(u)+\wei(e)$.
\end{restatable}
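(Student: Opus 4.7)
For any edge $uv = e \in E$ with $u, v \in \Vtouch$, let $t_u$ and $t_v$ denote the last times $u$ and $v$ are popped from $Q$ during $\PD$. These are well-defined: each vertex in $\Vtouch$ is inserted into $Q$ at some point, and $\PD$ only terminates once $Q$ is empty. Let $f_u$ and $f_v$ be the values of $d(u)$ and $d(v)$ at $t_u$ and $t_v$, respectively. I will rely on two general facts: (a) $d$ is non-increasing in time throughout $\PD$, since both branches that modify $d$ only decrease it; and (b) for any $w\in\Vtouch$, $d(w)$ does not change after $t_w$ --- any modification would put $w$ back into $Q$ (via decrease-key or re-insertion) and force another pop, contradicting the choice of $t_w$. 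In particular $f_u$ and $f_v$ are the final values of $d(u), d(v)$. The proof then splits on the order of $t_u$ and $t_v$.

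\textbf{Case $t_v < t_u$.} Here the plan is to show by induction on events after $t_v$ that every key present in $Q$ remains at least $f_v$. The base case is the min-heap property at $t_v$. For the inductive step, whenever some vertex $p$ is popped with key $K \geq f_v$ and an out-edge $pq$ is processed, the only ways $Q$'s state changes are either a decrease-key $d(q) \gets \min(d(q), K + \wei(pq))$ (when $q \in Q$), or a re-insertion with $d(q) = K + \wei(pq)$; since $\wei(pq) \geq 0$ and any $d(q)$ already in $Q$ is $\geq f_v$ by the inductive hypothesis, the resulting value in $Q$ stays $\geq f_v$. Instantiating this at $t_u$ gives $f_u \geq f_v$, which already implies $f_v \leq f_u + \wei(e)$.

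\textbf{Case $t_v > t_u$.} Here I would examine what happens at time $t_u$, when $u$ is popped with $d(u) = f_u$ and the algorithm processes $uv$. If $v \in Q$ at $t_u$, the decrease-key branch forces $d(v) \leq f_u + \wei(e)$ and monotonicity gives $f_v \leq f_u + \wei(e)$. If $v \notin Q$ and the big-drop condition $d(v) > (1+\epsp)(f_u + \wei(e))$ holds, then $v$ is re-inserted with $d(v) = f_u + \wei(e)$ and the same conclusion follows. The delicate subcase is $v \notin Q$ together with $d(v) \leq (1+\epsp)(f_u + \wei(e))$ at $t_u$: no action is taken, witnessing only a $(1+\epsp)$-slack. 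However, since $t_v > t_u$, $v$ must be re-inserted at some later time $\sigma \in (t_u, t_v]$, triggered by some incoming edge $u''v$ whose big-drop condition $d(v)|_{\sigma} > (1+\epsp)(d(u'')|_{\sigma} + \wei(u''v))$ holds; combining this with monotonicity $d(v)|_{\sigma} \leq d(v)|_{t_u} \leq (1+\epsp)(f_u + \wei(e))$ yields $d(u'')|_{\sigma} + \wei(u''v) < f_u + \wei(e)$, so immediately after the re-insertion $d(v) < f_u + \wei(e)$, and monotonicity of $d(v)$ finishes the argument.

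\textbf{Main obstacle.} The delicate point is the $(1+\epsp)$-slack intrinsic to $\PD$'s big-drop rule: a naive inspection of the event at $t_u$ in the problematic subcase above only yields the approximate relaxation $d(v) \leq (1+\epsp)(d(u) + \wei(e))$, not the required exact one. The resolution comes from the case split: either the head $v$ has already settled by the time the tail $u$ does (i.e.\ $t_v < t_u$), and monotonicity of the keys present in $Q$ after $t_v$ absorbs the slack at no cost; or the head is re-inserted after $t_u$, in which case the re-insertion's own big-drop condition strictly tightens the slack by precisely the $1+\epsp$ factor that caused it, recovering strict relaxation.
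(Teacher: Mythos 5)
Your proof is correct and takes essentially the same approach as the paper's: fix the moment $u$ is popped, case-split on whether $v$ was already popped, is in the queue, or is not yet in the queue (with the big-drop condition holding or not), and in the last subcase use the re-insertion condition's $(1+\epsp)$-drop to cancel the $(1+\epsp)$-slack. Your treatment is somewhat more explicit about the Dijkstra-style monotonicity of popped keys (which the paper compresses into the opening observation that $d(x)$ is frozen once $x$ is popped), but the case analysis and the crucial tightening argument in the delicate subcase are the same.
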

  \begin{proof}
    Note that for any $x\in \Vtouch$, after $x$ is popped from $q$, $d(x)$ will not change any further until $\PD$ completes 
    since all the estimate changes that follow will set the estimates to values no less than $d(x)$.
		Consider the moment when $u$ is popped from the queue $Q$ in $\PD$.

  \begin{enumerate}[label=${\arabic*}^\circ)$]
			\item $v$ was popped from $Q$ before $u$. In that case $d(v) \le d(u)$, thus
        $e$ is currently relaxed and this will not change in the current $\PD$ call, as desired.
      \item $v$ is currently in $Q$. Then observe that we set $d(v):=\min\set{d(v), d(u) + \wei(e)}$. Thus we have $d(v) \le d(u) + w(e)$ at that point and $d(v)$ will not increase, so $e$ will remain relaxed till the end of $\PD$.
			\item $v$ has not been pushed to $Q$ yet. In particular, $v\notin \Vinput$.

  \begin{enumerate}[label=$3.{\arabic*}^\circ)$]
				\item
          $d(v)>(1+\epsp)(d(u)+\wei(e))$. In that case when relaxing the edge $e$, we set $d(v):=d(u) + w(e)$ which
          makes $e$ relaxed.
          $d(v)$ can only decrease, and thus $e$ will remain relaxed.
				\item
          $d(v)\leq (1+\epsp)(d(u)+\wei(e))$.
          Note that when $v$ is pushed to $Q$ later on, $d(v)$ must decrease by a factor of $1 + \epsilon$, thus the estimate $d'(v)$ of $v$
          when $\PD$ completes will satisfy
          \[(1 + \epsp)d'(v) < d(v).\]
					This, combined with
          our assumption, yields:
          \[d'(v)<d(u)+\wei(e). \]
          Since $d(u)$ will not decrease any further,
          $e$ will be relaxed at the end of $\PD$.
			\end{enumerate}
		\end{enumerate}
	\end{proof}

We call a path $P=u\to v$ in $G$ \emph{relaxed} if $d(v)\leq d(u)+\wei(P)$. Note that if all edges of $P$ are relaxed then by chaining inequalities for the individual edges we obtain that $P$ is relaxed as well.

The following
bounds the total time needed to maintain the estimates subject to edge insertions.
\begin{lemma}\label{l:charge-decrease}
If edge insertions issued to $G$ are processed using~\Cref{obs:insert}, then the total time spent on maintaining the estimates $d(\cdot)$ is $O(n\ell\log{n}+m\ell)=O((m+n\log{n})\log{(nW)}/\epsp)$.
\end{lemma}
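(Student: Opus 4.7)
The plan is to charge the work to a global potential counting, for each vertex, how many times its estimate $d(v)$ has dropped geometrically.

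The key claim is that across the entire history of the data structure, each vertex $v$ is pushed into the priority queue at most $O(\ell)$ times (summing over all invocations of $\PD$, whether via $\Vinput$ or via the elif branch inside $\PD$). I would establish this by observing that every single push of $v$ onto $Q$ is immediately preceded by a strict decrease of $d(v)$ by a factor of at least $1+\epsp$. Inside $\PD$, this is manifest from the elif condition $d(v)>(1+\epsp)(d(u)+\wei(e))$ together with the assignment $d(v):=d(u)+\wei(e)$. In the $\Vinput$ case coming from~\Cref{obs:insert}, the very trigger for making the call is the same condition on the newly inserted edge, followed by the same assignment, so again $d(v)$ drops by a factor exceeding $1+\epsp$. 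Since the estimates lie in $\{0\}\cup[1,nW]$, this can happen at most $\ell=\lceil\log_{1+\epsp}(nW)\rceil$ times per vertex (with at most one additional ``free'' push if one starts from $d(v)=\infty$), which is still $O(\ell)$.

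With that invariant in hand, the cost analysis is straightforward once $Q$ is implemented as a Fibonacci heap. The total number of inserts is $O(n\ell)$, costing $O(n\ell)$ amortized. Each insert is paired with exactly one extract-min, so there are $O(n\ell)$ extract-mins, costing $O(n\ell\log n)$ amortized. When $u$ is popped, the for-loop scans all outgoing edges of $u$; since each vertex is popped at most $O(\ell)$ times, the total number of edge scans is $O(m\ell)$, each doing $O(1)$ work apart from at most one decrease-key. Decrease-keys in a Fibonacci heap cost $O(1)$ amortized, so their aggregate cost is also $O(m\ell)$. Summing yields $O(n\ell\log n+m\ell)$, which equals $O((m+n\log n)\log(nW)/\epsp)$ by the definition of $\ell$.

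The only delicate point is Step~1: the ``one factor-$(1+\epsp)$ drop per push'' invariant must be argued uniformly over both sources of pushes (initialization via $\Vinput$ from~\Cref{obs:insert} and internal relaxations in $\PD$), so that the potential argument is truly global across all $\PD$ calls and not merely per-call. Using Fibonacci-heap amortized bounds for decrease-key is what prevents an extra $\log n$ factor on the $m\ell$ term and matches the claimed bound exactly.
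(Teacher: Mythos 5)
Your proof is correct and follows essentially the same approach as the paper: charge each push of a vertex to $Q$ to a geometric $(1+\epsp)$ drop in its estimate, bound the number of drops per vertex by $\ell$ via the range $\{0\}\cup[1,nW]$, and then sum $O(\deg(v)+\log n)$ over all $O(\ell)$ pops of each vertex. The paper's proof is terser and leaves the priority-queue implementation implicit, while you correctly make explicit that an $O(1)$ amortized decrease-key structure (e.g.\ a Fibonacci heap) is what keeps the $m\ell$ term free of an extra $\log n$ factor; this is a fair observation but not a different argument.
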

\begin{proof}
A vertex $v$ is pushed to the queue $Q$ of $\PD$ only after $d(v)$ drops by a factor at least $1+\epsp$. This may happen at most $\ell=O(\log(nW)/\epsp)$ times per vertex since $d(v)\leq nW$ initially and either $d(v)\geq 1$ or $d(v)=0$. Processing $v$ every time it is popped (in $O(\log{n})$ time) from the queue $Q$ costs $O(\deg(v))$ time.
\end{proof}

\section{An SSSP data structure for source insertions}\label{sec:source-insertions}
In this section, inspired by the deterministic algorithm of~\cite{KyngMG22}, we describe an incremental SSSP data structure supporting only a very limited type of edge insertions, namely, insertions of edges originating directly in the source $s$.
We first show a basic variant and then also describe some extensions that will be required in our incremental APSP data structure.
We show:
\begin{theorem}\label{t:source-sssp}
Let $\epsp\in (0,1)$. Let $G=(V,E)$ be a digraph with weights in ${\{0\}\cup [1,W]}$ and~a~source $s\in V$. There exists a data structure explicitly maintaining estimates $d:V\to\mathbb{R}_{\geq 0}$ satisfying:
\begin{itemize}
  \item $d(v)$ is the weight of some $s\to v$ path in $G$, ie., $\dist_G(s,v)\leq d(v)$, and
  \item $d(v) \leq (1+\epsp)^{\log_2{m}+1}\dist_G(s,v)$
\end{itemize}
for all $v\in V$.
The data structure supports insertions (or weight decreases) of edges $e=sv$, $v\in V$.

The total update time of the data structure
is $O(m\log{(nW)}\log{n}/\epsp+\Delta)$, where $m$ is the final number of edges in~$G$
and $\Delta$ is the total number of updates issued.
\end{theorem}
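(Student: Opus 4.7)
The plan is to invoke the propagation procedure $\PD$ from \Cref{sec:propagate} to maintain the estimates $d$, and to exploit the fact that only source-incident edges are updated both to amortize its cost and to control the approximation factor. Upon each source-edge insertion $sv$ of weight $w$, if $d(v) > (1+\epsp)w$ we follow \Cref{obs:insert}, setting $d(v) := w$ and calling $\PD(\{v\})$; this preserves \Cref{slack-inv} throughout the update sequence. The first guarantee $d(v) \ge \dist_G(s,v)$ follows by a routine structural induction, since every assignment $d(v) := d(u) + \wei(e)$ made inside $\PD$ certifies $d(v)$ as the length of a concrete $s \to v$ walk in $G$, and the initial assignment to a freshly inserted source edge trivially has the same property.

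The main work will go into the approximation bound. Naively chaining \Cref{slack-inv} along the shortest $s \to v$ path of hop-length $k$ gives only $d(v) \le (1+\epsp)^k \cdot \dist_G(s,v)$, which is useless unless $k = O(\log m)$. To obtain the claimed $(1+\epsp)^{\log_2 m + 1}$ bound, I would layer the scheme following the shortcut-based hop-reduction ideas of \cite{KyngMG22,ChechikZ21}. Concretely, I would maintain $L = \lceil \log_2 m \rceil + 1$ nested $\PD$-instances producing estimate vectors $d_0,\ldots,d_L$, where level~$i$ (for $i \geq 1$) receives, whenever level $i{-}1$'s $\PD$ touches a vertex $u$ and decreases $d_{i-1}(u)$, a virtual source-edge insertion $su$ of weight $d_{i-1}(u)$. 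The externally reported estimate is $d := d_L$. By induction on $i$, I would prove that $d_i(v) \le (1+\epsp)^{i+1}\cdot \mu_i(v)$, where $\mu_i(v)$ denotes the minimum weight of an $s \to v$ walk in $G$ using at most $2^i$ edges. Splitting an optimal $\le 2^i$-hop walk at its midpoint $u$ gives access at level $i$ to a virtual source edge $su$ that certifies a good prefix; \Cref{lem:no_slack} applied along the suffix (whose internal vertices must themselves have been touched at level $i$) ensures the second half is traversed slack-free, so only a single $(1+\epsp)$ factor is paid in passing from level $i{-}1$ to level $i$. Since any simple $s \to v$ path has at most $m$ edges, level $L$ already covers all shortest paths.

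For the total update time, \Cref{l:charge-decrease} bounds the amortized cost of a single level at $O((m + n\log n)\log(nW)/\epsp)$, since a vertex's estimate at that level can drop by a $(1+\epsp)$ factor only $O(\log(nW)/\epsp)$ times. Summing over the $L = O(\log m) = O(\log n)$ levels yields the stated $O(m \log(nW)\log(n)/\epsp)$ bound, with the additive $\Delta$ term absorbing the constant-time per-update bookkeeping needed to dispatch an input update into level~$0$.

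The step I expect to be most delicate is the inductive approximation argument — specifically, justifying via \Cref{lem:no_slack} that the midpoint of the hop-halved walk is indeed in the touched set at level $i$ (so that the virtual source edge certifying the prefix is already present) and that every internal vertex on the suffix is touched as well (so that the chained inequality along the suffix is genuinely slack-free). A secondary concern is that the amortized number of virtual source-edge insertions forwarded between adjacent levels must not exceed the budget implicitly allowed by \Cref{l:charge-decrease}, which should follow from the same monotonicity observation that drives the single-level analysis.
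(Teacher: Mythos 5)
Your plan diverges from the paper at its core. The paper maintains a \emph{single} estimate vector $d$ together with a per-vertex $\rank(v)$ certifying (Definition~\ref{def:cert}) that at most $(1+\epsp)^{\rank(v)}$ slack has accumulated along any path in $G[V\setminus\{s\}]$ leaving $v$, and it keeps the maximum rank at $\le\log_2 m$ via the degree-balanced $\textsc{Synchronize}$ procedure (Lemma~\ref{l_is_log}), whose correctness rests on the recertification lemma (Lemma~\ref{recertification_theorem}). You instead propose $\log_2 m$ nested estimate vectors $d_0,\dots,d_L$ linked by hop-halving virtual source edges, hoping to pay only a single $(1+\epsp)$ factor per level.

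The step you flagged as ``most delicate'' is in fact a genuine gap, not a detail. In your inductive step, splitting an optimal $\le 2^i$-hop walk at midpoint $u$ does let the virtual edge $su$ control $d_i(u)$ in terms of $d_{i-1}(u)$, but the assertion that all internal vertices of the suffix $P_2=u\to v$ ``must themselves have been touched at level $i$'' is unjustified and false in general. Lemma~\ref{lem:no_slack} only certifies relaxation of an edge when \emph{both} endpoints lie in $\Vtouch$ of the \emph{same} $\PD$ call. When the virtual insertion triggers $\PD(\{u\})$ at level $i$, propagation is free to escape at the first suffix vertex $w$ whose $d_i(w)$ is already within a $(1+\epsp)$ factor of the pushed value; from that point on the remaining edges of $P_2$ are governed only by Invariant~\ref{slack-inv}, so the error along the suffix compounds to $(1+\epsp)^{|P_2|}$, not $(1+\epsp)^{1}$. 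Nothing in your hierarchy controls what happens past the escape point — that is precisely what the paper's certificate bookkeeping (Observation~\ref{certificates_remain_valid} plus Lemma~\ref{recertification_theorem}) and $\textsc{Synchronize}$ exist to handle, by guaranteeing that the escaped-to vertex is $k$-certified for a small $k$ and then keeping all $k$'s bounded. Without an analogue of that machinery at each level, the per-level error is not $O(\epsp)$, and stacking $\log m$ levels does not save you. If you did equip each level with its own rank/synchronize bookkeeping you could fix the argument, but then the multi-level construction no longer simplifies or improves on the paper's single-level solution; the hop-halving buys you nothing here.
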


\begin{remark}\label{remark: epsp fix}
In~\Cref{t:source-sssp}, we bound the multiplicative error by $(1+\epsp)^{\log_2{m}+1}$ merely for
convenience: $(1+\epsp)$ corresponds to the ``multiplicative slack'' on each edge maintained
in~\Cref{slack-inv}.
  To achieve a target approximation factor of $1+\eps$ as in~\Cref{t:source-sssp-intro}, it is enough to set $\epsp:=\eps/(2\log_2{m}+2)$.\footnote{
  This follows by the inequality $(1+x/(2k))^k\leq e^{x/2}\leq 1+x$ that holds for all $x\in (0,1)$.}
The total update time in terms of $\eps$ then becomes $O(m\log(nW)\log^2{n}/\eps+\Delta)$.
\end{remark}

\subsection{Basic setup}\label{sec:setup}
For simplicity, let us assume that $G$ initially contains $n$ edges $sv$ of weight $nW$, so that every
vertex is reachable from $s$ and $\dist(s,v)\leq nW$. Note that unless $v$ is not reachable from $G$,
this assumption does not ever influence $\dist(s,v)$. To drop the assumption, we might run a separate
incremental single-source reachability data structure with linear total update time (such as~\cite{Italiano86})
and potentially return $\infty$ instead of the maintained $d(v)$ if $v$ is not (yet) reachable from $s$.

The estimates $d(v)$ the data structure maintains come from $\{0\}\cup [1,nW]$. 
They are initialized to $d(v):=\dist(s,v)$ using Dijkstra's algorithm, so that $d(s)=0$ and $d(v)\leq nW$.
The data structure's operations will guarantee that the estimates never decrease and, moreover,
each $d(v)$ always corresponds to the length of some $s\to v$ path in $G$.
Consequently, we will have $d(s)=0$ and $\dist(s,v)\leq d(v)$ for $v\in V$ at all times.

Let again ${\ell:=\lceil \log_{1+\epsp}(nW)\rceil}$. 
Crucially, the maintained estimates $d(\cdot)$ also obey~\Cref{slack-inv}.
\subsection{Handling source insertions}

Consider the insertion pseudocode shown in \Cref{alg:insert}.
When an edge $e=sv$ is inserted, the insertion is initially handled as explained
in~\Cref{obs:insert}.
This alone guarantees that~\Cref{slack-inv} is satisfied and costs $O(n\ell\log{n}+m\ell)$ time through all insertions (see~\Cref{l:charge-decrease}).
Afterwards, further steps (lines~\ref{line:recertify}~and~\ref{line:synchronize} in~\Cref{alg:insert}) are performed to reduce estimate errors accumulated on paths in $G$.
Roughly speaking, these steps will interact with the estimates only via some additional $\PD$ calls. 
As a result, they cannot break~\Cref{slack-inv} and thus we will assume it remains satisfied throughout.
In the following, we explain the ideas behind these steps in detail.

\begin{algorithm}[t]
\caption{\textproc{Source-Insert}($e=sv$)}\label{alg:insert}
\begin{algorithmic}[1]
  \State{$E := E \cup \set{e}$}
  \If{$d(v) > (1 + \epsp)\wei(e)$}
    \State{$d(v) := \wei(e)$}
    \State{$\Vtouch := \PD(\set{v})$} \label{line:propagation}
    \State Increase the ranks of vertices $\Vtouch$ to $r+1$\label{line:recertify}
    \State{$\textsc{Synchronize}()$}\label{line:synchronize}
  \EndIf
\end{algorithmic}
\end{algorithm} 
\begin{definition}[certificates]\label{def:cert}
Let $k\geq 0$ be an integer.
We say that a vertex $u \in V$ is $k$-certified (or has a $k$-certificate) if for every path $P = u \to v$ in $G[V\setminus \{s\}]$ we have
\[d(v)\leq (1+\epsp)^k\cdot (d(u)+\wei(P)).\] 
\end{definition}
As there exist no paths from $s$ in $G[V\setminus\{s\}]$, $s$ is trivially $k$-certified for any $k$. We now state
some properties vertex certificates' behavior after running $\PD$.
\begin{restatable}{observation}{ocertificatesremainvalid}\label{certificates_remain_valid}
Let $\Vinput\subseteq V$ be an arbitrary subset. Suppose $\Vtouch:=\PD(\Vinput)$ is run.
Then, for every $u\in V \setminus \Vtouch$, if $u$ was $k$-certified before, then it remains $k$-certified.
\end{restatable}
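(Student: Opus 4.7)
The plan is to observe that the proof reduces to two simple monotonicity facts about $\PD$: it never increases any estimate, and it never modifies the estimate of a vertex outside $\Vtouch$. Once these are pinned down, the certificate property is preserved essentially for free.

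First I would verify, by inspection of Algorithm~\ref{alg:propagate-additive}, that every write to an estimate $d(v)$ happens either at line~6 (a ``decrease key'' inside the queue) or at line~8, both of which can only lower $d(v)$: line~6 uses a $\min$, and line~8 is executed only when the guard $d(v)>(1+\epsp)(d(u)+\wei(e))>d(u)+\wei(e)$ holds, and then assigns $d(v):=d(u)+\wei(e)$, which is strictly smaller. So $\PD$ is monotone: if $d'$ denotes the estimate vector at the end of $\PD(\Vinput)$, then $d'(x)\le d(x)$ for all $x\in V$.

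Next I would observe that any such write requires $v$ to already be in $Q$ (line~6) or to be pushed to $Q$ and to $\Vtouch$ in the same iteration (line~9). Combined with $Q\subseteq \Vtouch$ being invariantly maintained (since $Q$ is initialized with $\Vinput=\Vtouch$ at the start and every subsequent push to $Q$ is accompanied by a push to $\Vtouch$), this shows that only vertices of $\Vtouch$ can have their estimates modified during $\PD$. In particular, for any $u\in V\setminus \Vtouch$ we have $d'(u)=d(u)$.

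Having established these two facts, the conclusion of \Cref{certificates_remain_valid} follows immediately. Fix $u\in V\setminus \Vtouch$ that was $k$-certified before the call, and let $P=u\to v$ be any path in $G[V\setminus\{s\}]$. Using monotonicity for $v$, equality for $u$, and the old certificate,
\[
d'(v)\;\le\; d(v)\;\le\;(1+\epsp)^k\bigl(d(u)+\wei(P)\bigr)\;=\;(1+\epsp)^k\bigl(d'(u)+\wei(P)\bigr),
\]
which is exactly the $k$-certificate for $u$ with respect to the updated estimates $d'$. There is no real obstacle here; the only subtlety to state carefully is that writes to $d(\cdot)$ in $\PD$ are confined to $\Vtouch$, which is a direct bookkeeping check on the pseudocode rather than anything structural.
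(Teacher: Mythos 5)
Your proof is correct and follows essentially the same route as the paper's: both rest on the two observations that $\PD$ never increases any estimate and never alters the estimate of a vertex outside $\Vtouch$, then chain $d_{\mathrm{after}}(v)\le d_{\mathrm{before}}(v)\le (1+\epsp)^k(d_{\mathrm{before}}(u)+\wei(P))=(1+\epsp)^k(d_{\mathrm{after}}(u)+\wei(P))$. The only differences are cosmetic: you swap the roles of $d$ and $d'$ relative to the paper, and you spell out the pseudocode-level bookkeeping (monotonicity of lines~6 and~8, and $Q\subseteq\Vtouch$) that the paper leaves implicit.
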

\begin{proof}
  Let $v\in V$. Let $d'(u),d'(v)$ be the respective estimates of $u$ and $v$ before running $\PD$. For any path $P=u\to v$ in $G[V\setminus\{s\}]$, we have $d'(v)\leq (1+\epsp)^k(d'(u)+\wei(P))$.
  But since $u\notin \Vtouch$, we have $d(u)=d'(u)$ and $d(v)\leq d'(v)$. Hence,
  $d(v)\leq (1+\epsp)^k(d(u)+\wei(P))$ also holds after $\PD$ completes.
\end{proof}
\begin{restatable}{lemma}{lrecertification}\label{recertification_theorem}
Suppose~\Cref{slack-inv} is satisfied.
Let $U \subseteq V$ be some set of $k$-certified vertices.
After running ${\Vtouch:=\PD(V \setminus U)}$, the vertices $\Vtouch$ are $(k + 1)$-certified.
\end{restatable}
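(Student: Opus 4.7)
The plan is to fix an arbitrary vertex $u\in\Vtouch$ together with an arbitrary path $P=u_0 u_1\cdots u_\ell$ in $G[V\setminus\{s\}]$ with $u_0=u$ and $u_\ell=v$, and to establish $d(v)\le (1+\epsp)^{k+1}(d(u)+\wei(P))$ via a case split on where, if at all, $P$ first leaves $\Vtouch$.

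Two preparatory observations drive the argument. First, since $\PD$ was called with $\Vinput = V\setminus U$, we have $V\setminus U\subseteq\Vtouch$; therefore any vertex of $P$ lying outside $\Vtouch$ must lie in $U$ and, by~\Cref{certificates_remain_valid}, remains $k$-certified after $\PD$ finishes. Second, by~\Cref{lem:no_slack}, every edge of $G$ whose both endpoints lie in $\Vtouch$ is relaxed once $\PD$ returns. Together, these say that most of $P$ contributes no multiplicative error, while~\Cref{slack-inv} still supplies the basic $(1+\epsp)$ slack on the remaining edges.

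For the case analysis: if every vertex of $P$ lies in $\Vtouch$, then all edges of $P$ are relaxed, so chaining gives $d(v)\le d(u)+\wei(P)$, which is stronger than needed. Otherwise, let $j\ge 1$ be the smallest index with $u_j\notin\Vtouch$, so $u_j\in U$ is still $k$-certified. Split $P$ at $u_j$ into a prefix $P_1=u_0\to u_j$ and a suffix $P_2=u_j\to v$. Applying the $k$-certificate of $u_j$ to $P_2$ yields
\[ d(v)\le (1+\epsp)^{k}\bigl(d(u_j)+\wei(P_2)\bigr). \]
For $P_1$, every edge except the last one has both endpoints in $\Vtouch$ and is relaxed by~\Cref{lem:no_slack}, while the last edge $u_{j-1}u_j$ is handled by~\Cref{slack-inv}; chaining then gives $d(u_j)\le (1+\epsp)\bigl(d(u)+\wei(P_1)\bigr)$. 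Substituting this into the previous inequality produces the claimed $(1+\epsp)^{k+1}$ bound.

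The one point needing care---and the main obstacle---is to isolate the single edge on which multiplicative slack can still appear, namely the one leaving $\Vtouch$ at $u_j$. Elsewhere on the path either~\Cref{lem:no_slack} provides a relaxed edge or the inherited $k$-certificate of $u_j$ absorbs the accumulated error, which is precisely why exactly one extra factor of $(1+\epsp)$ accrues, raising the certificate exponent from $k$ to $k+1$.
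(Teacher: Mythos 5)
Your proof is correct and takes essentially the same approach as the paper: identify the first vertex $u_j$ of $P$ that exits $\Vtouch$, use $V\setminus U\subseteq\Vtouch$ and \Cref{certificates_remain_valid} to conclude $u_j\in U$ remains $k$-certified, apply \Cref{lem:no_slack} on the prefix and \Cref{slack-inv} on the boundary edge $u_{j-1}u_j$, and chain. The only cosmetic difference is that you bundle the prefix and the boundary edge into a single piece $P_1$, whereas the paper splits $P$ into three parts; the substance is identical.
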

\begin{proof}
Let $u\in \Vtouch$ and consider any path $P = u \to t$ in $G[V\setminus\{s\}]$, where $t\in V$.
  If $V(P)\subseteq \Vtouch$, then by \Cref{lem:no_slack}, we have 
  $d(t)\leq d(u)+\wei(P)\leq (1+\epsp)^{k+1}(d(u)+\wei(P))$.
  Otherwise, let $v$ be the first vertex on $P$ from outside $\Vtouch$.
  Note that $u\neq v$ and since $V\setminus U\subseteq \Vtouch$, $v\in U$.
  Let $v'$ be the vertex preceding $v$ on $P$.
  Express $P$ as $P_1P_2P_3$, where $P_1=u\to v'$, $P_2$ is a single-edge $v'\to v$ path, and $P_3=v\to t$.
  Consider each of the paths $P_1,P_2,P_3$ individually:
  \begin{itemize}
    \item Since $V(P_1)\subseteq \Vtouch$, by~\Cref{lem:no_slack}, we have $d(v')\leq d(u)+\wei(P_1)$.
    \item By~\Cref{slack-inv} and $v'v\in E$, $d(v)\leq (1+\epsp)(d(v')+\wei(P_2))$.
    \item Since $v\in U\setminus \Vtouch$, by the assumption and~\Cref{certificates_remain_valid}, $d(t)\leq (1+\epsp)^k(d(v)+\wei(P_3))$.
  \end{itemize}
  By chaining the inequalities, we obtain:
  \[
    d(t)\leq (1+\epsp)^k\cdot ((1+\epsp)(d(u)+\wei(P_1)+\wei(P_2))+\wei(P_3))
        \leq (1+\epsp)^{k+1}\cdot (d(u)+\wei(P)),
  \]
  as desired.
\end{proof}

The pursued data structure additionally tracks information about vertex certificates.
Since that a vertex $v$ is $k$-certified implies it is $(k + 1)$-certified as well,
for each vertex $v \in V$ we will only care about the smallest value of $k$ that we can deduce.
Initially (before updates come), each vertex is $0$-certified since all edges are relaxed.
Note that due to an insertion of an edge from $s$ itself alone, no vertex stops being $k$-certified as~\Cref{def:cert} is concerned about paths in $G[V\setminus\{s\}]$.
Moreover, if an insertion of an edge is handled as in~\Cref{obs:insert}, then by~\Cref{certificates_remain_valid},
a vertex $u \in V$ can only stop being $k$-certified when $u \in \Vtouch$ after running $\PD$.
We will show that~\Cref{recertification_theorem} is useful for controlling this effect and keeping vertex certificates small.

For each $v \in V$, the data structure additionally maintains an integer $\rank(v)\geq 0$, such that~$v$ is $\rank(v)$-certified.
Initially, $\rank(v) = 0$ for all vertices $v\in V$.
Let $C_k \subseteq V$ denote the current set of vertices $v$ with $\rank(v) = k$ and let $r$ denote the \emph{current maximum rank}.
The data structure explicitly maintains these sets alongside the information about their total degree $\deg(C_k)=\sum_{v\in C_k}\deg(v)$.
This additional bookkeeping is easy to implement subject to vertices' ranks changes and will be mostly neglected in the following description and analysis.

As discussed previously, running $\PD$ in line~\ref{line:propagation} of~\Cref{alg:insert} invalidates the (at most) $r$-certificates of vertices from $\Vtouch$.
However, by~\Cref{recertification_theorem}, for each $v \in \Vtouch$ we can fix this by assigning $\rank(v) := r + 1$ which
increases the maximum rank $r$ by one.
Doing this alone would cause $r$ to grow linearly with the number of $\PD$ invocations made during the algorithm
and consequently make the distance estimates very inaccurate.
In order to negate this effect, we will use the $\textsc{Synchronize}$ procedure (called in line~\ref{line:synchronize} in~\Cref{alg:insert}) shown in \Cref{alg:synchronize}.

\begin{algorithm}
\caption{$\textsc{Synchronize}()$}\label{alg:synchronize}
\begin{algorithmic}[1]
  \While{there exists $k \in \set{0, \dots, r}$ such that $\deg(C_k) \le \deg(C_{k + 1}) + \dots + \deg(C_{r})$}
    \State{$\Vtouch:= \PD(C_{k} \cup C_{k + 1} \cup \dots \cup C_{r})$} \label{line:synchronized_propagation}
    \State Set the ranks of vertices $\Vtouch$ to $k$
  \EndWhile
\end{algorithmic}
\end{algorithm}

\begin{observation}\label{sync_correct}
$\textsc{Synchronize}$ correctly reassigns ranks.
\end{observation}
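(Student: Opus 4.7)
The plan is to prove the stronger invariant that throughout the data structure's execution, each vertex $v\in V$ is $\rank(v)$-certified. Since Invariant~\ref{slack-inv} is maintained at all times (as noted just before Algorithm~\ref{alg:synchronize}), and ranks are altered only in line~\ref{line:recertify} of Algorithm~\ref{alg:insert} and inside $\textsc{Synchronize}$ itself, the correctness claim boils down to showing that a single iteration of the while loop in Algorithm~\ref{alg:synchronize} preserves this invariant.

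First I would handle the vertices whose rank is untouched by an iteration. Fix the index $k$ chosen by the iteration and let $\Vtouch := \PD(C_k \cup \dots \cup C_r)$; only vertices in $\Vtouch$ are reassigned (to rank $k$). For any $u \notin \Vtouch$, $\rank(u)$ does not change, and Observation~\ref{certificates_remain_valid} guarantees that $u$ remains $\rank(u)$-certified after the $\PD$ call, so the invariant is preserved for these vertices.

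Next I would handle the vertices in $\Vtouch$, which receive rank $k$. For $k \ge 1$, set $U := C_0 \cup \dots \cup C_{k-1}$, so the $\PD$ call coincides with $\PD(V \setminus U)$. By the inductive hypothesis each $u \in U$ is $\rank(u)$-certified with $\rank(u) \le k-1$; using the elementary monotonicity $(1+\epsp)^j \le (1+\epsp)^{k-1}$ for $j \le k-1$, every vertex of $U$ is $(k-1)$-certified. Lemma~\ref{recertification_theorem} then yields that every vertex of $\Vtouch$ is $k$-certified, matching its newly assigned rank.

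The main obstacle is the boundary case $k = 0$: there $U$ would have to be empty and Lemma~\ref{recertification_theorem}, applied as a black box, only gives $(k'+1)$-certification for some $k' \ge 0$, strictly weaker than $0$-certification. I would close this case directly: the input to $\PD$ is $C_0 \cup \dots \cup C_r = V$, and since $\PD$ initialises $\Vtouch := \Vinput$ we must have $\Vtouch = V$. Lemma~\ref{lem:no_slack} then asserts that every edge $uv \in E$ is relaxed, i.e.\ $d(v) \le d(u) + \wei(uv)$; chaining this along any path $P = u \to t$ in $G[V\setminus\{s\}]$ gives $d(t) \le d(u) + \wei(P)$, so every $u \in V$ is $0$-certified. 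Combined with the previous paragraph and an induction on iterations, this shows that $\textsc{Synchronize}$ leaves every vertex $v$ certified at its new rank.
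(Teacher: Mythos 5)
Your proof is correct and follows essentially the same route as the paper: apply Lemma~\ref{recertification_theorem} with $U := C_0 \cup \dots \cup C_{k-1}$ to certify $\Vtouch$ at rank $k$, and Observation~\ref{certificates_remain_valid} for vertices outside $\Vtouch$. Your explicit treatment of the $k = 0$ boundary via Lemma~\ref{lem:no_slack} (noting $\Vtouch = V$ so all edges become relaxed) is a welcome refinement — the paper's phrasing ``$C_0\cup\ldots\cup C_{k-1}$ are $(k-1)$-certified'' relies on reading the empty set as vacuously $(-1)$-certified, which sits slightly outside Definition~\ref{def:cert}'s requirement that $k \geq 0$.
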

\begin{proof}
Note that the vertices $C_0\cup\ldots\cup C_{k-1}$ are $(k-1)$-certified.
As a result, by~\Cref{recertification_theorem}, after running $\PD(C_k\cup\ldots\cup C_r)$,
we have that vertices $\Vtouch$ are $k$-certified and thus can be assigned ranks $k$.
By~\Cref{certificates_remain_valid}, there is no need to change the ranks of $V\setminus\Vtouch$.
\end{proof}
\begin{lemma}\label{l_is_log}
When $\textsc{Synchronize}$ completes, the maximum rank $r$ satisfies $r\leq \log_2 m$.
\end{lemma}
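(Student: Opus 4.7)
\medskip

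\noindent\textbf{Proof plan for \Cref{l_is_log}.} The plan is to analyze the state of the sets $C_0,\ldots,C_r$ \emph{after} $\textsc{Synchronize}$ terminates and extract a geometric-growth relation on the suffix degree sums. The loop of \Cref{alg:synchronize} stops precisely when its guard fails for every admissible~$k$, i.e., when
\[
\deg(C_k)\;>\;\deg(C_{k+1})+\ldots+\deg(C_r)\qquad\text{for every } k\in\{0,\ldots,r\}.
\]
The specialization $k=r$ is the empty-sum case, which says $\deg(C_r)>0$, so in particular $C_r$ contains at least one vertex with positive (out-)degree whenever $r\geq 1$.

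Next, I introduce the suffix sum $T_k:=\sum_{j\geq k}\deg(C_j)$. The inequalities above rewrite as
\[
T_k\;=\;\deg(C_k)+T_{k+1}\;>\;2\,T_{k+1}\qquad\text{for every } k\in\{0,\ldots,r-1\},
\]
while the $k=r$ case yields $T_r=\deg(C_r)\geq 1$ (when $r\geq 1$). A straightforward induction on $r-k$ then gives $T_0>2^r\,T_r\geq 2^r$.

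Finally, I would observe that each edge of $G$ is counted in $\deg(v)$ for exactly one vertex $v$ (its tail), and since the sets $C_0,\ldots,C_r$ partition $V$, we have $T_0=\sum_{v\in V}\deg(v)=m$. Combining this with $T_0>2^r$ yields $2^r<m$, hence $r<\log_2 m$, and because $r$ is an integer this in particular implies $r\leq\log_2 m$, as claimed. The trivial case $r=0$ requires no argument. The only place that needs a bit of care is the boundary of the loop guard at $k=r$ (to get the base $T_r\geq 1$ of the induction) and the fact that $\deg(\cdot)$ is out-degree so that the partition $V=\bigsqcup_k C_k$ cleanly yields $\sum_k\deg(C_k)=m$; neither point is a real obstacle.
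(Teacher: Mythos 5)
Your proof is correct and takes essentially the same approach as the paper's: both exploit that, once the loop exits, the guard fails for every $k$, which forces geometric growth of degree sums down the rank hierarchy and caps $r$ at $\log_2 m$. Your suffix-sum reformulation ($T_k > 2T_{k+1}$, hence $T_0 > 2^r T_r \ge 2^r$, with $T_0 = m$) is a clean restatement of the paper's induction $\deg(C_{r-i}) \ge 2^i$ together with $\deg(C_0) \le m$, and you correctly pin down the base case $\deg(C_r) \ge 1$ via the $k=r$ guard --- a point the paper's write-up states somewhat carelessly.
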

\begin{proof}
When $\textsc{Synchronize}$ completes, we have
\begin{equation}\label{eq:sync-cond}
\deg(C_k) > \deg(C_{k + 1}) + \dots + \deg(C_{r})
\end{equation}
for all $k \in \set{0, \dots, r}$.
We will show by induction that
$ C_{r - i} \ge 2^i $
for all $i = 0, \dots, r$.
For $i = 0$, we have $C_{r} \ge 0$ trivially.
For $i \ge 1$, by~\eqref{eq:sync-cond} applied to $k=r-i$ and the inductive assumption we get:
\[ \deg(C_{r - i}) > \deg(C_{r - i + 1}) + \dots + \deg(C_{r}) \ge 2^{i - 1} + \dots + 1 = 2^i - 1, \]
so $\deg(C_{r - i}) \ge 2^i$, finishing the proof.
Note that for $i = r$, we get $\deg(C_0) \ge 2^{r}$ and since $\deg(C_0) \le m$, we get $2^r \le m$.
\end{proof}

\Cref{sync_correct} and \Cref{l_is_log} show that by using synchronization, the ranks of the vertices will be correct and will be kept bounded by $\log_2 m$.
\begin{lemma}\label{l:rank}
For any $t\in V$, we have
$ d(t)\leq (1+\epsp)^{r+1}\cdot \dist(s,t)=(1+\epsp)^{1+\log_2{m}}\cdot \dist(s,t).$
\end{lemma}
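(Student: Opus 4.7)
The plan is to take any shortest $s\to t$ path in $G$ and split off its first edge. Since $G$ has non-negative edge weights, we may assume the shortest path is simple; the case $t=s$ is trivial since $d(s)=0=\dist(s,s)$, so assume $t\neq s$. Write the shortest path as $P = e_0 \cdot P'$, where $e_0 = sv$ is the first edge and $P' = v \to t$ is a (possibly empty) path contained entirely in $G[V\setminus\{s\}]$. The reason for this decomposition is that the certificate notion from~\Cref{def:cert} only controls paths avoiding~$s$, so we cannot apply it directly at $s$; instead, I would pay a factor of $(1+\epsp)$ for the first hop using~\Cref{slack-inv}, and then use the certificate of $v$ for the rest of the path.

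Concretely, first I would apply~\Cref{slack-inv} to $e_0 = sv$: since $d(s) = 0$, this yields $d(v) \leq (1+\epsp)\wei(e_0)$. Next, by the construction of the data structure, $\rank(v)$ is maintained so that $v$ is $\rank(v)$-certified, so I would apply~\Cref{def:cert} to the path $P'$ (treating the empty-path case $t=v$ as $\wei(P')=0$) to obtain $d(t) \leq (1+\epsp)^{\rank(v)}(d(v) + \wei(P'))$.

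Chaining the two inequalities gives
\[ d(t) \leq (1+\epsp)^{\rank(v)}\left((1+\epsp)\wei(e_0) + \wei(P')\right) \leq (1+\epsp)^{\rank(v)+1}\left(\wei(e_0) + \wei(P')\right) = (1+\epsp)^{\rank(v)+1}\,\dist(s,t). \]
Since $\rank(v) \leq r$ by definition of $r$, and by~\Cref{l_is_log} we have $r \leq \log_2 m$, the claim $d(t) \leq (1+\epsp)^{r+1}\dist(s,t) = (1+\epsp)^{1+\log_2 m}\dist(s,t)$ follows. I do not expect any real obstacle here; the only subtle point is that certificates are defined only with respect to paths in $G[V\setminus\{s\}]$, which is precisely why stripping off the initial source edge (and separately accounting for its slack) is essential rather than attempting to apply a certificate at $s$ itself.
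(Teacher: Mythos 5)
Your proof is correct and follows essentially the same route as the paper's: split off the first source edge from a simple shortest path, bound its contribution via Invariant~\ref{slack-inv}, bound the remainder via the $\rank(v)$-certificate of the second vertex, and chain. The only (cosmetic) difference is that you correctly write the final comparison $\rank(v)+1\leq r+1$ and $r\leq\log_2 m$ as inequalities where the paper sloppily writes an equality sign.
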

\begin{proof}
For $t=s$, the lemma is trivial since $d(s)=0$ holds always.
Let $t\neq s$.
Consider a simple shortest path $P$ from $s$ to $t$ and let $u \in V$ be the second vertex on $P$.
We have $P=e\cdot P'$, where $e=su$, $P'=u\to v$ and $P'\subseteq G[V\setminus \{s\}]$.
By \Cref{slack-inv}, we have ${d(u)\leq (1+\epsp)(d(s)+\wei(e))}$.
By the definition of $\rank(u)$, we have $d(t)\leq (1+\epsp)^{\rank(u)}\cdot (d(u)+\wei(P'))$.
As $\rank(u)\leq r$, we get:
\begin{align*}
  d(t)&\leq (1+\epsp)^{\rank(u)} ((1+\epsp)(d(s)+\wei(e))+\wei(P'))
  \leq (1+\epsp)^{\rank(u)+1}\wei(P)
  =(1+\epsp)^{r+1}\dist(s,t).
\end{align*}
Finally, $d(t)\leq (1+\epsp)^{1+\log_2{m}}\cdot\dist(s,t)$ follows by combining the above with~\Cref{l_is_log}.
\end{proof}

\paragraph{Running time analysis.} 
First of all, observe that all edge updates that do not immediately
decrease some estimate by a factor of at least $1+\epsp$ are processed in constant time.
The total cost of processing such updates is hence $\O(\Delta)$.

It is easy to see that, apart from the above, the running time is dominated by the total time of the $\PD$ calls. 
In fact, the total update time can be (coarsely) bounded by summing the degrees of the vertices that are pushed to the queue $Q$ in $\PD$.
Similarly as in~\Cref{sec:propagate}, we can bound the total cost incurred by pushing a vertex after its estimate drops by
\[ \O(m\ell+n\ell\log{n})=\O((m+n\log{n})\log(nW)/\epsp).\]

Note that the only other way a vertex $v$ can be pushed to the queue in $Q$ (without a drop in the estimate $d(v)$) is if $v$ is in the input set of the $\PD$ call inside $\textsc{Synchronize}$.
Thus, consider some call $\PD(C_k \cup C_{k + 1} \dots \cup C_{r})$ inside $\textsc{Synchronize}$ for $k \in \set{0, \dots, r}$ such that
  $\deg(C_k) \le \deg(C_{k + 1}) + \dots + \deg(C_r)$.
Observe that in such a case we can bound the cost of processing the vertices initially pushed to the queue $Q$ by $\O(\log{n})$ times
  \[ \deg(C_k) + \deg(C_{k + 1}) + \dots + \deg(C_r) \le 2 \big( \deg(C_{k + 1}) + \dots + \deg(C_r) \big).\]
Recall that after such a $\PD$ call completes, the ranks of all vertices
$C_{k+1}\cup\ldots\cup C_r$ drop by at least $1$. We can thus bound the
total sum of expressions $\deg(C_{k+1})+\ldots+\deg(C_r)$ throughout by $m$ times the maximum number of times some vertex $v$ may have its rank \emph{increased}.
However, note that the rank of a vertex $v$ can only increase if $v\in\Vtouch$ in line~\ref{line:recertify} of~\Cref{alg:insert}.
But the presence of $v$ in $\Vtouch$ in that case is caused by $d(v)$ dropping by a factor of at least $1+\epsp$.
We conclude that $\rank(v)$ may only increase $\O(\ell)$ times.
As a result, the total cost $\PD$ calls in $\textsc{Synchronize}$ can
also be bounded by $\O(m\ell\log{n})=\O(m\log(nW)\log(n)/\epsp)$.

The above analysis combined with~\Cref{l:rank} yields~\Cref{t:source-sssp}.
\subsection{Handling general edge insertions in a special case}\label{sec:reset}
Recall that the data structure of~\Cref{t:source-sssp} does not allow edge insertions except
for edges that originate in the source. 
In our later developments~(\Cref{sec:all-pairs}), we need to insert batches $F$ of arbitrary edges
to the data structure from time to time.
One way to handle this would be to reinitialize the data structure for the graph $G+F$ after each such insertions batch.
This would be too costly to yield non-trivial applications though.

We nevertheless prove that if an additional assumption holds that
the maintained estimates for~$G$ are reasonably accurate for $G+F$ before applying the insertion,
then inserting $F$ can be handled without compromising the running time 
albeit at the cost of introducing additional error.
\newcommand{\roff}{\rho}
\begin{restatable}{lemma}{larbitraryextension}\label{l:arbitrary-extension}
The data structure of~\Cref{t:source-sssp} can be extended to also support
inserting batches $F\subseteq V\times V$ of arbitrary weighted edges to $G$ if provided with an integer exponent $\alpha\geq 0$ such that
the currently stored estimates satisfy
$d(v)\leq (1+\epsp)^{\alpha}\dist_{G+F}(s,v)$ for all $v\in V$.

Then, at all times the maintained estimates satisfy:
\[ d(v)\leq (1+\epsp)^{1+\roff+\log_2{m}} \dist_G(s,v)\text{ for all }v\in V, \]
where $\roff$, called the \emph{rank offset}, equals the exponent $\alpha$ of the most recently performed batch-insertion. The total
update time of the data structure remains $\O(m\log(n)\log(nW)/\epsp+\Delta)$.
\end{restatable}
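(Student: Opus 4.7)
The plan is to augment the data structure of~\Cref{t:source-sssp} with a new operation $\textsc{Batch-Insert}(F, \alpha)$ and to modify $\textsc{Synchronize}$ so that it is aware of a dynamic rank-offset variable $\roff$. On input $(F, \alpha)$, the operation would proceed as follows: (i)~set $\roff := \alpha$ and $E := E \cup F$; (ii)~lazily reset $\rank(v) := \roff$ for every $v \in V$ using a timestamp mechanism (so that step~(ii) itself takes amortized $O(1)$ time per batch); (iii)~process each edge $uv \in F$ just as in \Cref{alg:insert}, i.e., if $d(v) > (1 + \epsp)(d(u) + \wei(uv))$, set $d(v) := d(u) + \wei(uv)$, compute $\Vtouch := \PD(\{v\})$, raise the ranks of $\Vtouch$ to $r + 1$, and invoke $\textsc{Synchronize}$. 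The modified $\textsc{Synchronize}$ is identical except that its while-loop quantifies $k$ over $\{\roff, \ldots, r\}$ rather than $\{0, \ldots, r\}$.

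Correctness hinges on justifying step~(ii). Since $d(v)$ continues to equal the weight of some $s \to v$ path that is already present in $G + F$, we have $\dist_{G+F}(s, v) \le d(v)$, so combined with the hypothesis $d(v) \le (1 + \epsp)^\alpha \dist_{G+F}(s, v)$, every path $P = u \to v$ in $(G + F)[V \setminus \{s\}]$ satisfies
\[
d(v) \le (1 + \epsp)^\alpha \dist_{G+F}(s, v) \le (1 + \epsp)^\alpha (\dist_{G+F}(s, u) + \wei(P)) \le (1 + \epsp)^\alpha (d(u) + \wei(P)),
\]
so every vertex is $\roff$-certified in the updated graph $G := G + F$. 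Step~(iii) then preserves and (where necessary) strengthens certifications via \Cref{certificates_remain_valid} and \Cref{recertification_theorem}, exactly as in the Source-Insert analysis. When the modified $\textsc{Synchronize}$ terminates, $\deg(C_k) > \deg(C_{k+1}) + \ldots + \deg(C_r)$ holds for all $k \in \{\roff, \ldots, r\}$; replaying the inductive argument of \Cref{l_is_log} restricted to the levels $\{\roff, \roff + 1, \ldots, r\}$ yields $\deg(C_\roff) \ge 2^{r - \roff}$, hence $r \le \roff + \log_2 m$. Finally, the proof of \Cref{l:rank} ports over verbatim to give $d(v) \le (1 + \epsp)^{r + 1} \dist_G(s, v) \le (1 + \epsp)^{1 + \roff + \log_2 m} \dist_G(s, v)$, as desired.

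For the running time, \Cref{l:charge-decrease} still bounds the cost of $\PD$ calls arising from \Cref{obs:insert} by $\O(m \log(nW) \log(n)/\epsp)$; and the $\PD$ calls inside $\textsc{Synchronize}$ remain amortized against rank decreases. Crucially, rank \emph{increases} continue to occur only in line~\ref{line:recertify} and are each powered by an estimate drop of factor $1 + \epsp$, so the amortization argument from the end of \Cref{sec:source-insertions} applies unchanged and bounds the total cost of $\textsc{Synchronize}$ by $\O(m \log(n) \log(nW) / \epsp)$. The only genuinely new ingredient is the lazy rank reset: each vertex $v$ carries a timestamp marking when $\rank(v)$ was last explicitly assigned, and vertices whose timestamps predate the most recent batch are treated as implicitly residing in $C_\roff$; $C_k$ and $\deg(C_k)$ are maintained explicitly only for "fresh" vertices, while stale vertices contribute to $C_\roff$ by subtraction. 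All such per-operation bookkeeping costs $O(1)$ amortized, keeping the total update time at $\O(m \log(n) \log(nW) / \epsp + \Delta)$.

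The main obstacle is verifying that this lazy bookkeeping does not break the amortized charging: one must ensure that $\deg(C_\roff)$ is correctly maintained when stale vertices are pooled in implicitly, that the modified $\textsc{Synchronize}$'s while-loop condition is tested using the \emph{effective} (not stored) ranks, and that a subsequent change to $\roff$ (which can move many vertices between the "stale" and "fresh" regimes at once) does not require additional worst-case work beyond the amortized budget already accounted for by estimate drops.
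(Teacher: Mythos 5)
Your plan and the paper's diverge in a way that matters, and the gap you flag at the end is the symptom of that divergence. The paper's own route is strictly simpler: it leaves $\rank(\cdot)$ and $\textsc{Synchronize}$ completely untouched, stores $\roff$ as a separate additive offset, and maintains the relaxed invariant that each $v$ is $(\roff+\rank(v))$-certified. After inserting $F$ edge-by-edge via \Cref{obs:insert} (so \Cref{slack-inv} is restored incrementally), the paper's \Cref{l:insert-arbitrary} shows every vertex is $\alpha$-certified; setting $\roff:=\alpha$ then makes the relaxed invariant hold trivially for the \emph{unchanged} ranks, because $(\roff+\rank(v))$-certification with $\rank(v)\ge 0$ is weaker than $\roff$-certification. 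Since the actual rank values and the $\textsc{Synchronize}$ loop are unmodified, \Cref{sync_correct}, \Cref{l_is_log}, and the amortized charging argument port over verbatim with no new bookkeeping, and the bound weakens exactly to $(1+\epsp)^{\roff+r+1}\le(1+\epsp)^{\roff+\log_2 m+1}$.

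Your variant folds the offset into the stored ranks, which forces you to modify $\textsc{Synchronize}$'s loop range and to invent a lazy rank-reset mechanism — and you correctly identify that verifying this mechanism against the amortization is nontrivial, then leave it unresolved. That unresolved obstacle is precisely what the paper avoids by never touching $\rank(\cdot)$. Two more concrete issues in your step~(iii): first, the rank-raises and $\textsc{Synchronize}$ calls while processing $F$ are not needed at all — you have already shown in step~(ii) that every vertex stays $\roff$-certified throughout (estimates only decrease, so the argument persists), hence the invariant holds without re-certification; second, because you set $E:=E\cup F$ up front in step~(i), \Cref{slack-inv} is violated for not-yet-processed edges of $F$ during your intermediate $\textsc{Synchronize}$ calls, so your appeal to ``exactly as in the Source-Insert analysis'' (which relies on \Cref{recertification_theorem}, whose hypothesis is \Cref{slack-inv}) is not justified as written. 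The paper sidesteps this by inserting $F$ one edge at a time and never calling $\textsc{Synchronize}$ during the batch. In short: your high-level error bound is right, but the route takes on extra machinery and two correctness obligations that the paper's offset-only formulation makes vanish.
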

One can view the basic data structure of~\Cref{t:source-sssp} (supporting only source insertions) as keeping the rank offset $\roff$ defined as above always zero.

To prove~\Cref{l:arbitrary-extension}, we start with the following lemma.
\begin{lemma}\label{l:insert-arbitrary}
  Let $F\subseteq V\times V$ be some set of weighted edges.
  Suppose for some integer $\alpha\geq 0$, the maintained estimates $d:V\to\{0\}\cup [1,nW]$ for $G$ additionally satisfy
  \begin{equation}\label{eq:guarantee}
    d(v)\leq (1+\epsp)^{\alpha}\dist_{G+F}(s,v)
  \end{equation}
  for all $v\in V$. Then, after inserting the edges $F$ one by one according to~\Cref{obs:insert},~\Cref{slack-inv} is satisfied and
  every vertex is $\alpha$-certified.
\end{lemma}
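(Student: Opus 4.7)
The plan is to first dispose of the invariant claim, then derive the $\alpha$-certification from two simple monotonicity observations combined with a triangle inequality.

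For~\Cref{slack-inv}, since the edges of $F$ are inserted one at a time according to~\Cref{obs:insert}, each such insertion (possibly triggering a $\PD$ call) restores~\Cref{slack-inv} on the graph in its current state. After the whole batch has been processed, the current graph is $G+F$ and~\Cref{slack-inv} holds.

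For the $\alpha$-certification, I would isolate two monotonicity facts that hold after the batch is fully processed. First, the hypothesis $d(v)\le (1+\epsp)^{\alpha}\dist_{G+F}(s,v)$ is preserved through all the insertions: both~\Cref{obs:insert} and $\PD$ only ever decrease the maintained estimates, while $\dist_{G+F}(s,v)$ stays fixed throughout (the target graph $G+F$ already contains every edge being inserted). Second, by the standing invariant of the data structure set up in~\Cref{sec:setup}, each $d(u)$ equals the weight of some concrete $s\to u$ path in the current graph; after the batch this graph is $G+F$, so $d(u)\ge \dist_{G+F}(s,u)$. With these in hand, fix any $u,v\in V$ and any path $P=u\to v$ in $(G+F)[V\setminus\{s\}]$; in particular $u\neq s$. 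The triangle inequality in $G+F$ together with the two facts above yields
\[ d(v) \;\le\; (1+\epsp)^{\alpha}\dist_{G+F}(s,v) \;\le\; (1+\epsp)^{\alpha}(\dist_{G+F}(s,u)+\wei(P)) \;\le\; (1+\epsp)^{\alpha}(d(u)+\wei(P)), \]
which is exactly the $\alpha$-certification of $u$ in the sense of~\Cref{def:cert}.

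The only point that needs any care is verifying that the hypothesized upper bound survives the batch of $\PD$-triggered insertions; I do not foresee any further obstacle, since the rest is monotonicity plus a single triangle inequality, and in particular the argument does not depend on the order in which the edges of $F$ are inserted.
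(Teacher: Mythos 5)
Your proof is correct and takes essentially the same route as the paper's: preserve the upper bound \eqref{eq:guarantee} through the batch because estimates only decrease, use that each $d(u)$ is the length of an actual $s\to u$ path in $G+F$ to get $\dist_{G+F}(s,u)\le d(u)$, and close with the triangle inequality in $G+F$. The only cosmetic difference is that you explicitly spell out the \Cref{slack-inv} part, which the paper leaves implicit since it follows directly from \Cref{obs:insert}.
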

\begin{proof}
  Note that inserting the edges according to~\Cref{obs:insert} can only decrease estimates $d(\cdot)$.
  Thus, if the inequality~\eqref{eq:guarantee} holds before the insertion, it will also hold afterwards.
  By construction, after processing the insertion, for all $v\in V$, $d(v)$ represents
  the length of some $s\to v$ path in $G+F$, and thus $\dist_{G+F}(s,v)\leq d(v)$.

  Consider any path $P=u\to v$ in $(G+F)[V\setminus \{s\}]$. We have:
  \begin{align*}
    d(v)&\leq (1+\epsp)^{\alpha}\cdot \dist_{G+F}(s,v)\\
        &\leq (1+\epsp)^{\alpha} \cdot (\dist_{G+F}(s,u)+\wei(P))\\
        &\leq (1+\epsp)^{\alpha} \cdot (d(u)+\wei(P)).
  \end{align*}
  Since the path $P$ was chosen arbitrarily, $u$ is $\alpha$-certified after applying the said operation.
\end{proof}
Applying \Cref{l:insert-arbitrary} with $\alpha\geq 1$ cannot guarantee that~\Cref{l:rank} holds after inserting a batch~$F$,
as some of the vertices might have had their rank smaller than $\alpha$ before.
As a result, their ranks might be incorrect now.
To deal with this, we slightly relax the invariant posed on the maintained ranks. Specifically, we augment the data structure of~\Cref{t:source-sssp} to also maintain an integral \emph{rank offset} $\roff\geq 0$
such that \emph{every vertex $v\in V$ is $(\roff+\rank(v))$-certified}.
One can view the basic data structure of~\Cref{t:source-sssp} as keeping the rank offset defined like this always zero.
This way, whenever~\Cref{l:insert-arbitrary} is applied given some exponent $\alpha$,
we can simply reset $\roff:=\alpha$.
It is easy to verify that both~\Cref{sync_correct}~and~\Cref{l_is_log} remain valid after this change.
However, the bound in~\Cref{l:rank} gets weakened to
\begin{equation}
  d(t)\leq (1+\epsp)^{\roff+r+1}\cdot\dist(s,t)\leq (1+\epsp)^{\roff+\log_2{m}+1}\cdot \dist(s,t).
\end{equation}
The running time analysis of the data structure of~\Cref{t:source-sssp} goes through unaltered.
Therefore, using~\Cref{l:insert-arbitrary}, we obtain~\Cref{l:arbitrary-extension}.

\subsection{Resetting the data structure}\label{sec:resetting}
Using~\Cref{l:arbitrary-extension} repeatedly may lead to a significant growth of the rank offset
which makes the vertex certificates larger and thus the maintained estimates less and less accurate.
Later on, in our incremental APSP application (\Cref{sec:all-pairs}), this will indeed prove to be a problem
and we will need to keep the rank offset bounded.
In this section we discuss some ways to achieve that.
\paragraph{A deterministic recompute-from-scratch reset.}
Recall from \Cref{lem:no_slack} that running \linebreak $\PD(V)$ makes all vertices $0$-certified.
As a result, by running $\PD(V)$ and subsequently setting all vertex ranks
and the rank offset $\roff$ to $0$, in $\O(m+n\log{n})$ time we can get rid of
the pathwise error of the maintained estimates $d(\cdot)$.

Note that in the running time analysis so far, the work performed
inside $\textsc{Synchronize}$ was fully charged to the unit rank increases that made the ranks of the $\PD$'s input vertices positive.
Consequently, as the discussed reset procedure zeroes out ranks, invoking it any number of times interleaved with
the other discussed operations does not increase
the total update time bound of the other operations of the data structure.

\paragraph{Randomized reset.}
Reducing the vertex certificates all the way down to $0$ might be unnecessary; keeping the certificates
bounded might be just enough.
In~\Cref{a:reset}, we discuss the randomized approach of~\cite{ChechikZ21} used to achieve that (adapted to our needs).
Formally, we show the following:

\newcommand{\bal}{\lambda}

\begin{restatable}{theorem}{trandomizedreset}\label{t:randomized-reset}
Let $\bal\in [1,\ell]$ be an integer. Then, in $\O\left(m\log^2(nW)\log^2(n)/(\bal^3\epsp^2)\right)$ additional time
one can adjust the estimates so that every vertex is $\lambda$-certified with high probability.
\end{restatable}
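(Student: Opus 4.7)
The goal is to drive the vertex certificates down to $\bal$ (weaker than the $0$-certification of the deterministic reset, but faster) using a randomized hub-sampling strategy in the style of~\cite{ChechikZ21}. The deterministic approach works by a single $\PD(V)$ call, which costs $\Ot(m)$ and zeroes out every certificate via~\Cref{lem:no_slack}. The target cost here is smaller by a $1/\bal^3$ factor, so the idea is to exploit a random hitting set that shields long paths while keeping the $\PD$ workload low.

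First, I would sample a hub hierarchy $V = H_0 \supseteq H_1 \supseteq \dots \supseteq H_\bal$, with each level obtained from the previous by independent Bernoulli subsampling at a carefully chosen rate $p$. A Chernoff and union-bound argument over all source--target pairs and their simple paths in $G[V\setminus\{s\}]$ would show that, with high probability, any path long enough to accumulate non-trivial certificate slack intersects every level of the hierarchy, so that it decomposes into at most $\bal$ sub-paths with endpoints on hubs. Then I would run $\PD(H_{i-1}\setminus H_i)$ for $i$ decreasing from $\bal$ down to $1$; by iterated application of~\Cref{recertification_theorem,certificates_remain_valid,lem:no_slack}, every vertex ends up $\bal$-certified, because the one level of certificate gained in each round composes along the hub decomposition of any candidate path, and intra-hub segments are directly relaxed by~\Cref{lem:no_slack}.

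For the running time, I would carry out a degree-weighted analysis analogous to~\Cref{l:charge-decrease}: only vertices pushed into $\PD$'s queue contribute, and the expected push count per level scales with the hub density $p$ and the cap on queue expansions forced by the hitting property. Setting $p = \Theta((\log n)/\bal)$ and summing the geometric series across the $\bal$ rounds, together with the $\O(\log(nW)/\epsp)$ bound on the number of estimate drops per vertex and an extra $\log n$ factor from the priority queue and the high-probability union bound, would give the advertised $\Ot(m \log^2(nW)\log^2(n)/(\bal^3\epsp^2))$ budget.

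The main obstacle is balancing the two conflicting demands on $p$: it must be large enough that the hub hierarchy hits every long path with high probability (needed for correctness) yet small enough that the $\PD$ input sets are sparse (needed for the $1/\bal^3$ speedup). A secondary technical difficulty is handling the adaptive interaction between $\PD$'s exploration and the randomness of the hub set, which forces the Chernoff bound to be applied before any $\PD$ call is made, and may require freezing the randomness to argue correctness against the worst-case subsequent behaviour of the data structure.
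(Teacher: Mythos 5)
Your proposal and the paper take genuinely different routes, and your approach has gaps that I do not think can be closed.

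The paper's randomized reset does \emph{not} sample vertices at all: it samples \emph{scale intervals} of the current distance estimates. Concretely, it picks $O((\ell/\bal)\log n)$ uniformly random intervals of the form $[(1+\epsp)^{j\cdot 8\ell/\bal},(1+\epsp)^{(j+2)\cdot 8\ell/\bal}]$, discards any sampled interval whose vertices have total degree exceeding $O(m\ell/\bal^2)$, lets $Z$ be the union of the surviving intervals' vertex sets, and makes a \emph{single} call $\PD(Z)$. Correctness is then proved by contradiction: if some vertex is not $\bal$-certified, there is a worst bad path $P$, which decomposes into more than $\bal$ multiplicative breakpoints $u_0,\dots,u_k$; a counting argument shows that at least a $\bal/(25\ell)$ fraction of the intervals both have small degree and fully contain $[d'(u_{i-1}),d'(u_{i+1})]$ for some $i$; hitting any one such interval forces, inductively, the entire suffix of $P$ from $u_{i-1}$ to be pushed to the queue and thereby relaxed by~\Cref{lem:no_slack}, contradicting minimality of $P$. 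The argument needs \emph{all} vertices whose estimates fall in a chosen scale band to be in $Z$ simultaneously, which Bernoulli vertex subsampling cannot provide.

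This is the crux of why your plan does not work. First, a hub hierarchy $H_0\supseteq\cdots\supseteq\bal$ with $\PD(H_{i-1}\setminus H_i)$ for $\bal$ rounds has the differences covering essentially all of $V$, so the total input-degree of the $\PD$ calls is $\Theta(m)$ and the claimed $1/\bal^3$ savings evaporates; the paper instead keeps the input degree at $\tilde{O}(m\ell^2/\bal^3)$ precisely by discarding heavy intervals. Second, a union bound ``over all source--target pairs and their simple paths'' is over exponentially many objects; the paper only ever conditions on a single worst-case bad path (whose existence is assumed for contradiction), and the probability boost is taken with respect to the $O((\ell/\bal)\log n)$ independent interval samples, not over paths. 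Third, the inductive argument that a suffix of the bad path enters the queue hinges on the breakpoint structure (more than $\bal$ multiplicative jumps, most of them concentrated in few scale bands), which your sketch never identifies; ``hitting every level of the hierarchy'' does not by itself compose into $\bal$-certification, because the certificate slack on a path is determined by where $(1+\epsp)$ factors accumulate, not by hop count, and random hubs do not align with those positions. Finally, the adaptivity issue you flag is a real one, but the paper sidesteps it simply by freezing $d'(\cdot)$ at the moment the reset begins and reasoning only about that fixed vector; there is no Chernoff-before-$\PD$ subtlety of the kind your sketch suggests.
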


\section{Incremental all-pairs shortest paths for adaptive adversaries}\label{sec:all-pairs}
In this section, we describe our incremental APSP data structure. We prove:
\tallpairs*

\newcommand{\Vcur}{V_{\mathrm{cur}}}
\newcommand{\Ecur}{E_{\mathrm{cur}}}
\newcommand{\Gbeg}{G_{\mathrm{beg}}}
\newcommand{\batchds}{\mathcal{B}}
\newcommand{\dsfrom}{\mathcal{D}}
\newcommand{\dsto}{\mathcal{D}^{R}}
\newcommand{\apsp}{\mathcal{A}}
\newcommand{\Gto}{G^{\mathrm{to}}}
\newcommand{\Gfrom}{G^{\mathrm{from}}}
\newcommand{\phlen}{b}
\newcommand{\reset}{p}

Let $\phlen\in [1,m]$ be an integer parameter to be set later.
The data structure operates in phases of $\phlen$ edge updates.
While a phase proceeds, denote by $\Ecur$ the edges whose insertions or weight decreases have been issued in the current phase.
Let $\Vcur$ be the set of endpoints of $\Ecur$, so that $|\Vcur|\leq 2b$. Moreover, let $\Gbeg$ denote the graph $G$ from the beginning of
the phase, that is, with all insertions from the previous phases applied.
Note that $G=\Gbeg+\Ecur$ at all times.

Let $\reset\in \mathbb{Z}_+$ be a \emph{reset parameter} and $\epsp\in (0,1)$ be an \emph{accuracy parameter},
both also to be set later.
Roughly speaking, the accuracy of the data structure will deteriorate slightly after each phase,
and will be restored every $\reset$ phases.
The accuracy parameter will be shared by all the data structure's components and will
be adjusted based on $\eps$ and other parameters at the very end.

We will now describe the data structure's components and the interplay between them.
For convenience, we will use the following notation: if $\mathcal{C}$ is a component maintaining
some vertex estimates, we will write $\mathcal{C}(v)$ to denote the estimate that $\mathcal{C}$ maintains for
the vertex $v$.

Before we continue, let us refer to a standard way of reducing to the case when there is only at most $\O(m\log(W)/\eps)$ weight decreases in total (see, e.g., \cite{Bernstein16}).
Indeed, we may only record a weight decrease for some edge $e\in E$ if the new weight of $e$ is smaller than the previously recorded weight of $e$ by a factor of at least $1+\eps$, and ignore it otherwise.
Having done that, by proceeding normally afterwards, the returned distance estimates might be distorted by an extra $1+\eps$ factor (in addition to the $1+\eps$ factor we aim for).
But this can be easily dealt with at no asymptotic cost by decreasing $\eps$ by a constant factor (say $4$) in the very beginning.
Filtering weight decreases like that clearly costs only $\O(\Delta)$ additional time.
Note that with this updates filtering scheme applied, we can bound the number of phases by $\O(m\log(W)/(\phlen\eps))$.

\paragraph{SSSP data structures with shortcuts.}
For each $s\in V$, we maintain a data structure $\dsfrom_s$ of~\Cref{t:source-sssp} extended to support
arbitrary batch insertions as described in~\Cref{l:arbitrary-extension}.
The data structure $\dsfrom_s$ maintains a graph $G_s$ obtained from $\Gbeg$ by extending
it with $n$ \emph{shortcut edges} $e_{s,u}=su$.
To the best of our knowledge, the idea of using shortcut edges for partially dynamic APSP is due to~\cite{Bernstein16} and has been applied in the incremental data structure of~\cite{KarczmarzL19} as well.

The weight of a shortcut edge $e_{s,u}$
always corresponds to the length of some $s\to u$ walk in \emph{the current graph} $G$.
Hence, $\wei_{G_s}(e_{s,u})\geq \dist_G(s,u)$.
As a result, for each $v\in V$, we have
\[ \dist_{G}(s,v)\leq \dist_{G_s}(s,v)\leq \dist_{\Gbeg}(s,v). \]
Our goal will be to maintain the shortcuts so that after each update, the following holds:
\[ \dist_{G_s}(s,v)\leq (1+\epsp)^{\O(p\log{n})}\cdot \dist_G(s,v). \]
For an appropriate $\epsp$, we will set $d(u,v):=\dsfrom_u(v)$ to be the final outputs of our data structure.

Recall that using the extension (\Cref{l:arbitrary-extension}) requires taking into account the rank offset $\roff$ defined therein.
All our single-source data structures will use the same rank offset $\roff$ which will grow slightly after every phase,
but at the same time it will be kept under control using resets.

We also maintain a symmetric collection of data structures of~\Cref{t:source-sssp} on the reverse
graph~$\rev{G}$, i.e., $G$ with edge directions reversed. That is, a data structure $\dsto_t$ maintains $\rev{\Gbeg}$ with some shortcuts $\rev{e_{t,u}}$
corresponding to paths from~$t$ in $\rev{G}$ added.
The actual purpose of $\dsto_t$ is to maintain approximate shortest paths \emph{to the vertex $t$ in $G$ from all $s\in V$};
running a data structure on the reverse graph turns such a single-sink problem into a single-source problem.

The single-source data structures $\dsfrom_s$ and $\dsto_t$ are shared by all the phases.

\paragraph{APSP between insertion endpoints.} Fix some phase.
Let $\Ecur(u,v)$ denote the smallest weight of an edge $uv\in \Ecur$, or $\infty$ if no such edge exists in $\Ecur$.
Let $H$ be a complete digraph on~$\Vcur$ such that
for all $u,v\in \Vcur$, we have $\wei_H(uv):=\min\left(\dsfrom_u(v),\Ecur(u,v)\right)$.

Note that while the phase proceeds, $\Vcur$ grows and thus the graph $H$ also grows.
Moreover, the weights of $H$ undergo weight decreases while the estimates
in the data structures $\dsfrom_u$, where $u\in \Vcur$, change.
We store the graph $H$ using the near-optimal incremental APSP data structure $\apsp$ for dense graphs, formally characterized below.

\begin{theorem}\label{t:dense-apsp}{\upshape{\cite{KarczmarzL20}}}
  Let $\epsp\in (0,1)$. Let $G=(V,E)$ be a directed graph with edge weights in $\{0\}\cup [1,W]$.
There exists a deterministic data structure maintaining $G$ subject to edge insertions
and weight decreases explicitly maintaining for all pairs $u,v\in V$ an estimate $d(u,v)$ such that
\begin{itemize}
  \item $d(u,v)$ is the weight of some $u\to v$ path in $G$, ie., $\dist_G(u,v)\leq d(u,v)$,
  \item $d(u,v) \leq (1+\epsp)\dist_G(u,v)$.
\end{itemize}
The total update time is $\O(n^3\log(n)\log(nW)/\epsp+\Delta)$,
where $\Delta$ is the total number of updates issued.
\end{theorem}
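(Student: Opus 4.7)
The plan is to maintain, for each source $s\in V$, an incremental $(1+\epsp)$-approximate SSSP structure tracking estimates $d_s(v)$ of $\dist_G(s,v)$, and then read off the required all-pairs estimates as $d(u,v):=d_u(v)$. The natural approach is to run $n$ parallel Dijkstra-like relaxations in the spirit of $\PD$ from Section~\ref{sec:propagate}: each $d_s(v)$ is arranged to always be the length of some $s\to v$ walk in $G$ (hence $\dist_G(s,v)\le d_s(v)$ automatically), initialized via a static Dijkstra per source in $O(n^2\log n)$ time. Upon an update changing the weight of $xy$ to $w$, for each source $s$ we test whether $d_s(x)+w$ improves $d_s(y)$ by at least a multiplicative $(1+\epsp')$ factor for a suitably chosen internal slack $\epsp'\le\epsp$; if so, we relax and launch a Dijkstra-like propagation that re-enqueues an out-neighbor only when a geometric drop in its estimate is registered.

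The amortization is standard. Since $d_s(v)\in\{0\}\cup[1,nW]$ is monotonically non-increasing under incremental updates, each vertex is re-pushed into $s$'s priority queue at most $O(\log_{1+\epsp'}(nW))=O(\log(nW)/\epsp')$ times, and each re-push costs $O(\deg(v)\log n)=O(n\log n)$ due to out-neighbor scans and priority-queue operations. Summed across all $n^2$ source-target pairs, this yields $O(n^3\log(nW)\log(n)/\epsp')$ total work, while the additive $O(\Delta)$ term absorbs trivial updates (those not improving any $d_s(\cdot)$ geometrically).

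The main obstacle is controlling error amplification: the per-edge slack of $(1+\epsp')$ compounds along paths of arbitrary hop-length, so $\epsp'=\epsp$ naively only yields a $(1+\epsp)^{\Theta(n)}$-approximation. The resolution I would pursue parallels the rank/synchronize machinery of Section~\ref{sec:source-insertions}: attach to each maintained estimate a certificate bounding the depth of the compounded error, and cap the maximum rank at $O(\log n)$ via a synchronize-like balancing procedure whose amortized cost is absorbed into the geometric-drop budget. With the rank capped at $O(\log n)$, setting $\epsp'=\Theta(\epsp/\log n)$ ensures $(1+\epsp')^{O(\log n)}\le 1+\epsp$, giving the desired $(1+\epsp)$-approximation. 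This adjustment multiplies the running time by only an additional $O(\log n)$ factor, preserving the target $\tilde{O}(n^3\log(nW)/\epsp)$ bound.
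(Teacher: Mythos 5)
The statement you're proving is not proved in this paper at all --- it is imported wholesale from~\cite{KarczmarzL20} (Lemma~5.4 there), with only a remark that the estimates can be made to correspond to actual path lengths. So there is no ``paper's proof'' to compare against; what matters is whether your sketch is sound on its own. It is not, and the gap is precisely where you wave your hands.

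Your plan is to run $n$ parallel $\PD$-style relaxations and control error compounding by porting the rank/\textsc{Synchronize} machinery of Section~\ref{sec:source-insertions}, capping ranks at $O(\log n)$. But that machinery is built specifically --- and essentially --- around insertions of edges incident to the source $s$. The crucial fact exploited there (stated right after Definition~\ref{def:cert}) is that inserting an edge $sv$ creates no new paths in $G[V\setminus\{s\}]$, so the insertion \emph{by itself} never invalidates any vertex's $k$-certificate; certificates can only be disturbed indirectly by the subsequent $\PD$ call, and only for the touched vertices $\Vtouch$, which \textsc{Synchronize} then repairs. For an arbitrary insertion or weight decrease on an edge $xy$ with $x\ne s$, this fails outright: the update creates new paths in $G[V\setminus\{s\}]$ through $xy$, which can break the certificates of $x$ and of every vertex that can reach $x$ --- a set that has nothing to do with $\Vtouch$ and that you have no efficient way to identify or recertify. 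The paper's own device for arbitrary insertions, Lemma~\ref{l:arbitrary-extension}, patches this only by assuming an a priori accuracy bound relative to $G+F$ and by bumping a rank offset $\roff$ that grows with every batch and is brought down only by expensive resets (Section~\ref{sec:resetting}); applied per-update to $n$ source structures, this blows past the $\tilde O(n^3)$ budget. So ``cap the maximum rank at $O(\log n)$ via a synchronize-like balancing procedure'' is exactly the step that does not go through, and the approximation guarantee you need collapses without it. (The actual data structure of~\cite{KarczmarzL20} is based on a quite different, hop-hierarchy label-correcting scheme, not on certificates.) A secondary, smaller issue: even granting the rank cap, your accounting gives $O(n^3\log^2(n)\log(nW)/\epsp)$ rather than the stated $O(n^3\log(n)\log(nW)/\epsp)$; you acknowledge this by retreating to $\tilde O$ notation, but it means your bound is strictly weaker than the theorem as written.
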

\begin{remark}
The above data structure~{\upshape{\cite[Lemma~5.4]{KarczmarzL20}}}, stated originally, does not enforce the returned estimates to be lengths of actual paths in $G$.
However, it can be easily modified to maintain ``witnesses'', i.e., weights of paths $u\to v$ of weight smaller than the corresponding estimate $d(u,v)$.
\end{remark}
\begin{remark}
The data structure in~\Cref{t:dense-apsp} assumes a fixed vertex set $V$, whereas in our use case, the vertex set grows. However, we know a bound $2\phlen$ on the final size of $\Vcur$ so we can start the data structure of~\Cref{t:dense-apsp} with a vertex set of size $2\phlen$ and map the remaining placeholder vertices to new vertices entering $\Vcur$ while the phase proceeds.
\end{remark}
The data structure~$\apsp$ is reinitialized from scratch each time a new phase starts.
In the following, we will write $\apsp(u,v)$ to denote the estimate $d(u,v)$ maintained by the data structure $\apsp$.

\paragraph{Setting the shortcuts' weights.}
We are now ready to describe how the weights of the shortcuts in the data structures $\dsfrom_s$ and
$\dsto_s$ are defined and updated.

We maintain the following invariants:
\begin{enumerate}[label=(\arabic*)]
\item $\wei(e_{s,t})\leq \apsp(s,t)$ and $\wei(\rev{e_{t,s}})\leq \apsp(s,t)$ for every $s,t\in \Vcur$,
\item $\wei(e_{s,t})\leq \dsto_t(s)$ for every $s\in V$ and $t\in \Vcur$.
\item $\wei(\rev{e_{t,s}})\leq \dsfrom_s(t)$ for every $s\in \Vcur$ and $t\in V$.
\end{enumerate}
Let us explain how invariant~(1) above is maintained; the other are maintained analogously.
Whenever the data structure $\apsp$ changes some of its maintained estimates $\apsp(u,v)$, the change is passed
as a source-weight-decrease to the data structures $\dsfrom_u$ and $\dsto_v$.
The existing weight of the relevant shortcut therein might be already smaller; in such a case
the weight decrease is simply ignored.
\paragraph{Closing a phase.}
When a phase is completed, the edges $\Ecur$ are batch-inserted to $\dsfrom_s$ and $\dsto_s$
for all $s\in V$ using Lemma~\ref{l:arbitrary-extension}. For this to be allowed, 
estimates in these data structures (storing the graphs $\Gbeg$ augmented with some shortcuts) need to approximate the distances
in $G$ well.
We will later prove (\Cref{l:apsp-accuracy}) that this is the case indeed.
Moreover, we will show that the rank offset of the single-source data structures
will only grow by $\O(\log{m})$ as a result of these batch-insertions.

\paragraph{Resets.} Every $\reset$ phases, we perform a reset in all the maintained single-source
data structures, as described in Section~\ref{sec:resetting}.
In the deterministic variant of our data structure, we use the deterministic reset,
which reduces the parameter $\roff$ in each of them all the way to $0$.
In the randomized variant, we use randomized resetting (\Cref{t:randomized-reset}) to reduce the rank offset $\roff$ to
$\roff^*:=\lceil \reset\log_2{m}\rceil$.

\subsection{Estimate error analysis}
Let $k< \reset$ denote the number of phases completed since the last reset. 
In this section, we
bound the errors of the individual components of the data structure
as a function of $k$.

Let $y:=\log_2{m}+1$.
The single-source data structures $\dsfrom_s$ and $\dsto_s$ (for $s\in V$) will all use the rank
offset $\roff$ no more than $\roff^*+k\cdot 4y$.
Note that for $k=0$, that is, immediately after a reset, this is indeed achieved by construction.
The below lemma bounding the distortions of the estimates maintained in individual components
is the key to the correctness.

\begin{restatable}{lemma}{lapspaccuracy}\label{l:apsp-accuracy}
Let $k< \reset$ denote the number of phases completed since the last reset. Then:
  \begin{enumerate}[label=(\roman*)]
\item for all $s,t\in V$, $\dsfrom_s(t)\leq (1+\epsp)^{\roff^*+k\cdot 4y}\cdot \dist_{\Gbeg}(s,t)$,
\item for all $s,t\in V$, $\dsto_t(s)\leq (1+\epsp)^{\roff^*+k\cdot 4y}\cdot \dist_{\Gbeg}(s,t)$,
\item for all $s,t\in \Vcur$, $\apsp(s,t)\leq (1+\epsp)^{\roff^*+k\cdot 4y+y}\cdot \dist_{G}(s,t)$,
\item for all $s\in \Vcur$ and $t\in V$, $\dsfrom_s(t)\leq (1+\epsp)^{\roff^*+k\cdot 4y+2y}\cdot \dist_G(s,t)$,
\item for all $s\in V$ and $t\in \Vcur$, $\dsto_t(s)\leq (1+\epsp)^{\roff^*+k\cdot 4y+2y}\cdot \dist_G(s,t)$,
\item for all $s,t\in V$, $\dsfrom_s(t)\leq (1+\epsp)^{\roff^*+k\cdot 4y+3y}\cdot \dist_G(s,t)$,
\item for all $s,t\in V$, $\dsto_t(s)\leq (1+\epsp)^{\roff^*+k\cdot 4y+3y}\cdot \dist_G(s,t)$.
\end{enumerate}
\end{restatable}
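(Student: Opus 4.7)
}
The plan is to induct on $k$ (completed phases since the last reset) and, within each phase, establish the seven bounds in the listed order, since each depends on those above it. For the base case $k=0$, the reset guarantees rank offset $\roff\leq\roff^*$ in every $\dsfrom_s$, $\dsto_t$, so (i), (ii) follow at once from \Cref{l:arbitrary-extension} (applied with $\alpha=\roff^*$), while (iii)--(vii) are vacuous because $\Vcur=\emptyset$ immediately after a reset and become meaningful only as the phase that follows proceeds.

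The key inductive move across phases is to use (vi), (vii) at the end of phase $k$ as the accuracy hypothesis required by \Cref{l:arbitrary-extension} when batch-inserting $\Ecur$ into each $\dsfrom_s$ and $\dsto_t$ to close that phase. Concretely, (vi) at the end of phase $k$ gives $\dsfrom_s(v)\leq(1+\epsp)^{\roff^*+4(k-1)y+3y}\dist_G(s,v)$, so we invoke the extension with $\alpha=\roff^*+4ky-y$. The identity $\dist_{G_s+\Ecur}(s,v)=\dist_G(s,v)$ needed by the extension holds because shortcuts are lengths of actual $s\to v$ walks in $G$ (giving $\geq\dist_G$), while $G_s+\Ecur\supseteq G$ gives the reverse inequality. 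After the batch insertion the new rank offset equals $\alpha$, so $\roff+y\leq\roff^*+4ky$; combined with $\dist_{G_s}(s,t)\leq\dist_{\Gbeg^{\mathrm{new}}}(s,t)$ this gives (i), and symmetrically (ii). Since $\Gbeg$ is frozen and the maintained estimates decrease monotonically, (i) and (ii) persist throughout the new phase.

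Within phase $k+1$, the remaining items are derived from (i), (ii) together with the shortcut invariants. For (iii), take any shortest $s\to t$ path $P$ in $G$ with $s,t\in\Vcur$ and partition $P$ at its $\Vcur$-vertices: because every edge of $\Ecur$ has both endpoints in $\Vcur$, each piece is either a single $\Ecur$-edge (whose weight bounds $\wei_H(\cdot)$ directly via $\Ecur(u,v)$) or a walk avoiding $\Vcur$ internally and therefore living in $\Gbeg$, whose weight bounds $\wei_H(u,v)\leq\dsfrom_u(v)$ through (i). Summing and applying \Cref{t:dense-apsp}'s $(1+\epsp)$-accuracy on $H$ yields (iii). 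For (iv) (and symmetrically (v)), split the shortest $G$-path from $s\in\Vcur$ at its last $\Vcur$-vertex $v$: in $G_s$, the shortcut $e_{s,v}$ (bounded by $\apsp(s,v)$ via invariant (1) and (iii)) followed by the $\Gbeg$-tail $v\to t$ produces a concrete path whose length controls $\dist_{G_s}(s,t)$, and the extension lemma's single $(1+\epsp)^{\roff+y}$ factor then converts this into the claimed bound on $\dsfrom_s(t)$. Items (vi), (vii) extend the argument to arbitrary $s,t\in V$ by cutting the $G$-path at the first \emph{and} last $\Vcur$-vertex it visits, using invariants (2), (3) together with (v) (respectively (iv)) to bound the additional shortcut hop, picking up one further $(1+\epsp)^y$ of slack.

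The main technical obstacle is to keep multiplicative errors from compounding across the four layers, so that the total slack accumulated per phase matches exactly the $4y$ growth of the rank offset. In particular, the step $(iv)$/$(v)\rightarrow(vi)$/$(vii)$ must not be done by naively multiplying an independent bound on $\dist_{G_s}(s,t)$ by the full $(1+\epsp)^{\roff+y}$ single-source slack; rather, one must exhibit the explicit path in $G_s$ (shortcut plus $\Gbeg$-tail) and appeal to \Cref{l:arbitrary-extension} against \emph{that} path's length, so that the shortcut's accuracy and the extension lemma's offset are charged additively (not multiplicatively, in the exponent) and each contribute at most one $y$ to the exponent bookkeeping.
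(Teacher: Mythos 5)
Your overall plan matches the paper's: induct on $k$ via items (vi)/(vii)$\to$(i)/(ii) through \Cref{l:arbitrary-extension}, then derive (iii) by decomposing a shortest $G$-path into $\Gbeg$-pieces and $\Ecur$-edges and invoking \Cref{t:dense-apsp}, then get (iv)/(v) by splitting at the last $\Vcur$-vertex and going through the shortcut plus a $\Gbeg$-tail, then (vi)/(vii) analogously. Your decomposition for (iii) (at all $\Vcur$-vertices) is a slightly finer partition than the paper's (at $\Ecur$-edges with maximal $\Gbeg$-pieces between them), but both give the same conclusion. The base-case remark that (iii)--(vii) are vacuous right after a reset is harmless but a little loose: those items must hold throughout the first phase, and the paper proves them for $k=0$ by the same general argument, not by vacuity.

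The genuine gap is in the step from a bound on $\dist_{G_s}(s,t)$ to a bound on $\dsfrom_s(t)$. You write that ``the extension lemma's single $(1+\epsp)^{\roff+y}$ factor then converts this into the claimed bound on $\dsfrom_s(t)$.'' But carrying this through literally gives $\dsfrom_s(t)\le (1+\epsp)^{\roff+y}\cdot(1+\epsp)^{\roff^*+k\cdot 4y+y}\dist_G(s,t)$, and since $\roff+y=\roff^*+k\cdot 4y$ this is $(1+\epsp)^{2(\roff^*+k\cdot 4y)+y}\dist_G(s,t)$, which is far larger than the claimed $(1+\epsp)^{\roff^*+k\cdot 4y+2y}\dist_G(s,t)$ once $k\ge 1$, and it would compound geometrically across phases (destroying the additive $4y$-per-phase bookkeeping the lemma promises). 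The paper, by contrast, uses the factor $(1+\epsp)^{y}$ in this step (``$\dsfrom_s(t)\le (1+\epsp)^y\dist_{G_s}(s,t)$''), which is exactly what makes the exponents work out. You sense the problem in your final paragraph, but the proposed remedy---``appeal to \Cref{l:arbitrary-extension} against that path's length'' so the slacks are ``charged additively''---is not a concrete argument: \Cref{l:arbitrary-extension} compares $\dsfrom_s(t)$ to $\dist_{G_s}(s,t)$ (the graph distance), not to the length of a particular path, so appealing to a specific path's length cannot reduce the factor. Until this step is pinned down with the $(1+\epsp)^y$ factor the paper uses (rather than $(1+\epsp)^{\roff+y}$), the proposal does not establish the stated exponents in (iv)--(vii).
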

\begin{proof}
Consider item~(i). For $k=0$ it follows by the guarantees of the reset. That is, when a deterministic
reset is performed at the beginning of the phase, we have $\dsto_s(t)\leq \dist_{\Gbeg}(s,t)$
  and if the reset is randomized, we have $\dsto_s(t)\leq (1+\epsp)^{\roff^*}\cdot \dist_{\Gbeg}(s,t)$.
Suppose $k\geq 1$. 
Then, just before the edges from the last phase were inserted into $\dsfrom_s$, we had
  \[\dsfrom_s(t)\leq (1+\epsp)^{\roff^*+(k-1)\cdot 4y+3y}\cdot\dist_G(s,t)\] by item~(vi) applied
for $(k-1)$.
Thus, by~\Cref{l:arbitrary-extension}, after the insertion we have
  \[ \dsfrom_s(t)\leq (1+\epsp)^{\roff^*+(k-1)\cdot 4y+3y+y}\cdot\dist_G(s,t)=(1+\epsp)^{\roff^*+k\cdot 4y}\cdot \dist_G(s,t). \]
The proof of item~(ii) is completely symmetric.

To prove~(iii), consider the graph $H$ maintained by $\apsp$. Let $P$ be the shortest $s\to t$ path in $G$. Express $P$ as $P_1e_1P_2e_2\ldots e_{k-1}P_k$ where
$k\geq 1$, $u_i\to v_i=P_i\subseteq \Gbeg$ is a possibly empty path and $e_i\in \Ecur$.
Note that $u_i,v_i\in \Vcur$ since $s,t\in \Vcur$ and all endpoints of the edges $e_i$ are in $\Vcur$
by definition.
Moreover, by construction we have $\dist_G(u_i,v_i)=\dist_{\Gbeg}(u_iv_i)$ for all $i$
and $\dist_G(v_i,u_{i+1})=\wei_G(e_i)$ for all $i<k$.
By item~(i), for all $i$ we have
\[ \wei_H(u_i,v_i)\leq \dsto_{u_i}(v_i)\leq (1+\epsp)^{\roff^*+k\cdot 4y}\cdot \dist_{\Gbeg}(s,t)=(1+\epsp)^{\roff^*+k\cdot 4y}\cdot \dist_{G}(u_i,v_i). \]
Similarly, we get:
\[ \wei_H(v_iu_{i+1})\leq \wei_G(e_i) = \dist_G(v_iu_{i+1}).\]
It follows that there is a path $s=u_1\to v_1\to u_2\to v_2\to \ldots \to u_k\to v_k=t$ in $H$ of weight at most $(1+\epsp)^{\roff^*+k\cdot 4y}\dist_G(s,t)$.
Hence, by~\Cref{t:dense-apsp},
\[\apsp(s,t)\leq (1+\epsp)\dist_H(s,t)\leq (1+\epsp)^{\roff^*+k\cdot 4y+y}\dist_G(s,t).\]

Now consider items~(iv)~and~(v). We will only prove~(iv), as the proof of~(v) is completely symmetric.
Consider a shortest path $P=s\to t$ in $G$. Express $P$ as $P_1P_2$, where $P_2=u\to t$ is such that $u$ is the last vertex from $\Vcur$ on $P$ (possibly $u=s$ or $u=t$).
Note that $E(P_2)\cap \Ecur=\emptyset$, so $P_2\subseteq \Gbeg$.
Since $s,u\in \Vcur$, by item~(iii) and the invariant~(1) posed on the shortcut edge $e_{s,u}$,
we have:
\[ \wei_{G_s}(e_{s,u})\leq \apsp(s,u)\leq (1+\epsp)^{\roff^*+k\cdot 4y+y}\cdot \wei(P_1).\]
But $P_2\subseteq \Gbeg\subseteq G_s$, so the $s\to t$ path $e_{s,u}\cdot P_2$ in $G_s$ has weight at most $(1+\epsp)^{\roff^*+k\cdot 4y+y}\cdot \dist_G(s,t)$.
Since $\dsfrom_{s}(t)\leq (1+\epsp)^y\cdot \dist_{G_s}(s,t)$, we obtain the desired inequality
\[ \dsfrom_s(t)\leq (1+\epsp)^{\roff^*+k\cdot 4y+2y}\cdot \dist_G(s,t). \]

Finally, we discuss items~(vi)~and~(vii). We only prove (vi), since the proof of (vii) is completely symmetric.
Consider  a shortest path $P=s\to t$ in $G$.
If $P\subseteq \Gbeg$, then $P\subseteq G_s$, so ${\dsfrom_s(t)\leq (1+\epsp)^y\dist_G(s,t)}$ by~\Cref{t:source-sssp}.
Otherwise, express $P$ as $P_1P_2$, where $P_2=u\to t$ is such that $u$ is the last vertex from $\Vcur$ appearing on $P$.
Note that $P_2\subseteq \Gbeg\subseteq G_s$, so $\dist_{G}(u,t)=\dist_{G_s}(u,t)$.
By invariant~(2) posed on the shortcut edge $e_{s,u}$ and item~(v), we get:
\begin{align*}
  \dist_{G_s}(s,t)&\leq \wei(e_{s,u})+\dist_{G_s}(u,t)\\
                  &\leq \dsto_u(s)+\dist_G(u,t)\\
                  &\leq (1+\epsp)^{\roff^*+k\cdot 4y+2y}\cdot \dist_G(s,u)+\dist_G(u,t)\\
                  &\leq (1+\epsp)^{\roff^*+k\cdot 4y+2y}\cdot \dist_G(s,t).
\end{align*}
It follows that $\dsfrom_s(t)\leq (1+\epsp)^{\roff^*+k\cdot 4y+3y}\cdot \dist_G(s,t)$, as desired.
\end{proof}

By items~(vi)~and~(vii) of~\Cref{l:apsp-accuracy} and since $\roff^*+k\cdot 4y+4y=\roff^*+(k+1)\cdot 4y$, at the end of the phase one can indeed
insert the edges $\Ecur$ to all the data structures $(\dsfrom_s)_{s\in V}$ and $(\dsto_t)_{t\in V}$
using~\Cref{l:arbitrary-extension}, set $\roff:=\roff^*+(k+1)\cdot 4y$, and initialize the next phase.

By~\Cref{l:apsp-accuracy}, we conclude that that the estimates stored in data structures $(\dsfrom_s)_{s\in V}$
satisfy:
\[ \dsfrom_s(t)\leq (1+\epsp)^{\roff^*+4\reset\cdot y}\cdot \dist_G(s,t). \leq (1+\epsp)^{5\reset\cdot y}\cdot \dist_G(s,t).\]
Thus, as long as $\epsp\leq \frac{\eps}{10\reset y}$, they indeed constitute $(1+\eps)$-approximate distance estimates in $G$.

\subsection{Running time analysis}
First of all, by~\Cref{t:source-sssp}~and~\Cref{l:arbitrary-extension},
the total cost of setting up and maintaining the single-source data structures $(\dsfrom_s)_{s\in V}$
and $(\dsto_t)_{t\in V}$ through all updates, excluding the resets and the $\Delta$ terms denoting the numbers of updates issued to these data structures is $O(n\cdot m\log(nW)\log(n)/\epsp)$. 

For each of the $\O(m\log(W)/(\eps\phlen))$ phases, we reinitialize and run a fresh data structure of~\Cref{t:dense-apsp}
on a graph with $\O(\phlen)$ vertices and $\O(\phlen^2)$ edges.
The total update time of that data structure, excluding the $\Delta$ term counting the issued updates, is $\O(\phlen^3\log(nW)\log(n)/\epsp)$ per phase.
Hence, the total cost incurred through all phases is $\O(m\phlen^2\log^2(nW)\log(n)/(\eps\epsp))$.

Now, let us consider the number of updates issued to the maintained components.
Recall our assumption that the total number of updates issued to $G$ is $\O(m\log(nW)/\eps)$.
Consequently, the total number of ``non-shortcut'' updates issued to the single-source data structures is $\Ot(nm\log(nW)/\eps)$.
Similarly, the number of updates issued to the data structures of~\Cref{t:dense-apsp} that are not caused by some estimate change in the single-source data structures is $\O(m\log(nW)/\eps)$.

Every update of a ``shortcut'' edge in the maintained data structures is caused by an estimate change in another component.
Hence, the total number of shortcut edge updates can be bounded by the total number of estimate changes in all the maintained components. 
That quantity can be bounded in turn by the total update time of all the maintained components ignoring the $\Delta$ terms denoting the numbers of
updates issued.
We conclude that this additional cost is asymptotically no more than what we have taken into account so far.

Recall that in the deterministic~variant, a reset costs $\O(mn\log{n})$ time.
In the randomized variant, by~\Cref{t:randomized-reset} applied with~${\bal=\roff^*}$, a reset costs $\O\left(\frac{nm\log^2(nW)}{\reset^3\epsp^2\log{n}}\right)$ time.
Recall that the number of resets is $\O(m\log(W)/(\eps\phlen\reset))$,
so the total cost of all resets is $\O\left(m^2n\log(W)\log(n)/(\eps\phlen\reset)\right)$
in the deterministic variant and $\O\left(\frac{m^2n\log^3(nW)}{\eps\phlen\reset^4\epsp^2\log{n}}\right)$
in the randomized variant.
By using \linebreak ${\epsp:=\eps/10\reset(\log{m}+1)}$, the final running time in the deterministic case becomes:
\[ \O\left(nm\log(nW)/\eps+nm\reset\log(nW)\log^2(n)/\eps+m\phlen^2\reset\log^2(nW)\log^2(n)/\eps^2+\frac{m^2n\log(n)\log(W)}{\eps\cdot \phlen\cdot \reset}\right). \]
By setting $\phlen=\sqrt{n}$ and $\reset=\frac{m^{1/2}\eps^{1/2}}{n^{1/4}\log{n}}$, the running
time becomes $\O(m^{3/2}n^{3/4}\log^2(nW)\log(n)/\eps^{3/2})$.

In the randomized case, the last term is replaced with
$m^2n\log^3{(nW)}\log{n}/(\phlen \reset^2 \eps^3)$.
By setting $\phlen=\sqrt{n}$ and $p=\frac{m^{1/3}}{n^{1/6}\eps^{1/3}}$, the running time becomes
$\O(m^{4/3}n^{5/6}\log^3(nW)\log(n)/\eps^{7/3})$.
\section{Offline incremental SSSP in near-linear time}\label{sec:offline}
In this section we describe a simple near-optimal offline incremental $(1 + \epsilon)$-approximate SSSP data structure and prove the following:
\restateOfflineTheorem*

\newcommand{\Res}{\mathcal{D}}
\newcommand{\dla}{d_{\le\alpha}}
\newcommand{\dgb}{d_{\ge\beta}}
\newcommand{\dg}{d_{\gamma}}
\newcommand{\dlg}{d_{\le\gamma}}
\newcommand{\dgg}{d_{\ge\gamma}}
\newcommand{\dinit}{d_\mathrm{init}}
\newcommand{\VV}{X}

\paragraph{Main idea.}

We first give a quick overview of the approach.
We construct a collection $\Res$ of distance estimates found for different vertices and different versions of the graph.
Assuming that for some query $(u, j)$ we store an estimate of $\dist_{G^j}(s, u)$ in $\Res$, we will be able to answer that query directly.
If the answer cannot be retrieved from $\Res$ directly, we look for the smallest estimate of $\dist_{G^i}(s, u)$ stored for some $i < j$ that we can find in $\Res$ and return that value as the answer instead.
If we ensure that for each such $j$
we can also find in $\Res$ an estimate of $\dist_{G^k}(s, u)$, where $k > j$
that is ``close enough'' to the reported estimate of $\dist_{G^i}(s, u)$, we can use the two ``surrounding'' estimates as lower and upper bounds for the distance sought in the query.

We build the collection $\Res$ outlined above as follows.
We first find exact distances in the initial and final versions $G^0$ and $G^\Delta$ and store them in $\Res$.
We then use a divide-and-conquer approach.
We use a recursive $\textproc{Search}$ subroutine (\Cref{alg:offline}), which given the current set $\Res$ and a range $[\alpha, \beta]$, determines the set of vertices $u$ such that can estimate $\dist_{G^j}(u)$ for all $j \in [\alpha, \beta]$ using the estimates already stored, and searches for additional estimates for the remaining vertices.
\paragraph{Setup.}

Similarly as with the previous data structures, let us assume for simplicity that $G$ initially contains $n$ edges $su$ of weight $nW$, so that $\dist(s, u) \le nW$ for all vertices $u$ throughout all updates.
Let $\epsp$ be an accuracy parameter, set to $\eps / (2 \log_2(\Delta) + 2)$.
We will write $\dla(u)$ to denote the smallest estimate of $\dist_{G^j}(s,u)$ stored in $\Res$ for some $j \le \alpha$
and similarly write $\dgb(u)$ to denote the largest estimate of $\dist_{G^j}(s,u)$ stored in $\Res$ for some $j \ge \beta$.
When using this notation we consider all the estimates that are \emph{currently} stored in $\Res$; what ``currently'' means depends on the context. 

\subsection{The algorithm}
We start by storing in $\Res$ the exact value of $\dist_{G^0}(s, u)$ and $\dist_{G^\Delta}(s, u)$ for each vertex $u$.
We find them by running Dijkstra's algorithm on $G^0$ and $G^\Delta$.
We then call the procedure $\textproc{Search}(0, \Delta)$, which accesses $\Res$ and stores some additional distance estimates of $\dist_{G^j}(s,u)$ for different $j$ and $u$ (see \Cref{alg:offline}).
After that, the construction of $\Res$ is complete and we can process the queries.

For each query concerning $\dist_{G^j}(s,u)$ we return the value $d_{\le j}(u)$, which corresponds to the smallest estimate of $\dist_{G^i}(s,u)$ for some $i \le j$ stored in $\Res$.

\begin{algorithm}[t]
\caption{\textproc{Search}($\alpha$, $\beta$)}\label{alg:offline}
Let $\Res$ be a global set that stores some estimates of $\dist_{G^j}(s,u)$ for different $u$ and $j$. \\
Let $\dla(u)$ denote the smallest estimate of $\dist_{G^j}(s,u)$ stored in $\Res$ for any $j \le \alpha$. \\
Let $\dgb(u)$ denote the largest estimate of $\dist_{G^j}(s,u)$ stored in $\Res$ for any $j \ge \beta$.
\begin{algorithmic}[1]
  \State If $[\alpha, \beta]$ is empty, end the procedure
  \State $\VV := \set{u\in V \ : \  \dla(u) > (1 + \epsp)\dgb(u)}$\label{line:VV}
  \State $\gamma := \left\lfloor\left(\alpha + \beta\right) / 2\right\rfloor$
  \State Construct a graph $H$ equal to $G^\gamma[\VV]$ with an additional source vertex $f$
  \For{$v \in \VV$}
        \State $\dinit(v) := \min\set{\dla(u) + w(e) \ : \ uv=e \in E(G^\gamma), \ u \not \in \VV}$ \label{line:d_init}
    \State In $H$ connect $f$ to $v$ with an edge of weight $\dinit(v)$
  \EndFor
  \State $\dg := \Call{Dijkstra}{H, f}$ \Comment{$\dg(v) = \dist_H(f, v)$ for all $v$} \label{line:Dijkstra}
  \For{$v \in \VV$}
    \State Store $d_\gamma(v)$ in $\Res$ as an approximation of $\dist_{G^\gamma}(s,v)$
  \EndFor
  \State \Call{Search}{$\alpha$, $\gamma - 1$}
  \State \Call{Search}{$\gamma + 1$, $\beta$}
\end{algorithmic}
\end{algorithm}

\subsection{Correctness analysis}
In this section we analyze the error buildup of the estimates stored in $\Res$ by different $\textproc{Search}$ calls, depending on their depth in the recursion tree, where we assume that the depth of the $\textproc{Search}(0, \Delta)$ call that we invoke directly is $0$. 
We then bound the error of the reported answers.

\begin{restatable}[]{lemma}{restateOfflineCorrectness}\label{theorem:error bound}
Consider a call $\textproc{Search}(\alpha, \beta)$ at depth $h$. Then for every $t\in \VV$, we have
\[ \dist_{G^\gamma}(s,t)\leq d_\gamma(t) \leq (1+\epsp)^{h+1}\cdot \dist_{G^\gamma}(s,t). \]
\end{restatable}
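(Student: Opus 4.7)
The plan is to induct on the recursion depth $h$ of a $\textproc{Search}$ call, strengthening the statement slightly so that it becomes maintainable: every estimate $d$ of $\dist_{G^j}(s,u)$ placed in $\Res$ by a call at depth $h'$ satisfies $\dist_{G^j}(s,u)\le d\le (1+\epsp)^{h'+1}\cdot \dist_{G^j}(s,u)$, with the exact values stored during initialization treated as having depth $-1$. Assuming this invariant for all estimates present in $\Res$ when $\textproc{Search}(\alpha,\beta)$ at depth $h$ begins, I extend it to the new estimates $d_\gamma(t)$ for $t\in\VV$ by arguing the two inequalities separately.

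The lower bound $\dist_{G^\gamma}(s,t)\le d_\gamma(t)$ is a sanity check. Every edge of $H$ is either a real edge of $G^\gamma[\VV]$ with its true weight, or a shortcut $f\to v$ of weight $\dinit(v)=\min\{\dla(u)+w(uv)\}$. By the inductive hypothesis each contribution $\dla(u)$ is an upper bound on $\dist_{G^j}(s,u)$ for some $j\le\alpha$; monotonicity of distances under edge insertions and weight decreases gives $\dist_{G^j}(s,u)\ge \dist_{G^\gamma}(s,u)$, so $\dla(u)\ge \dist_{G^\gamma}(s,u)$ and $\dinit(v)\ge \dist_{G^\gamma}(s,v)$. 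Dijkstra's algorithm therefore returns $d_\gamma(t)\ge \dist_{G^\gamma}(s,t)$.

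For the upper bound, take a shortest $s\to t$ path $P=s=u_0\to u_1\to\ldots\to u_k=t$ in $G^\gamma$. The invariant forces every stored estimate for the source $s$ to equal zero, so $\dla(s)=\dgb(s)=0$ and $s\notin\VV$; since $t\in\VV$, there is a largest index $j$ with $u_j\notin\VV$, so $u_{j+1},\ldots,u_k\in\VV$ and the tail $u_{j+1}\to\ldots\to u_k$ lies entirely in $H$. Combining $\dinit(u_{j+1})\le \dla(u_j)+w(u_ju_{j+1})$ with the fact that subpaths of shortest paths are shortest yields
\[ d_\gamma(t)\;\le\; \dinit(u_{j+1})+\sum_{i=j+1}^{k-1} w(u_iu_{i+1})\;\le\; \dla(u_j)+\dist_{G^\gamma}(u_j,t). \]
Since $u_j\notin\VV$ we also have $\dla(u_j)\le (1+\epsp)\dgb(u_j)$, so it suffices to prove $\dgb(u_j)\le (1+\epsp)^h\cdot\dist_{G^\gamma}(s,u_j)$.

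This last step is the main obstacle and relies on the shape of the recursion tree. I claim that every estimate stored in $\Res$ at an index $j'\ge\beta$ at the moment $\textproc{Search}(\alpha,\beta)$ begins comes either from the initial exact value at index $\Delta$, or from an ancestor whose recursion went to its \emph{left} child (so the ancestor's splitting index $\gamma'$ satisfies $\gamma'\ge\beta+1$). Indeed, the subtree of a left sibling of any ancestor occupies indices strictly below $\alpha\le\beta$, while the right subtree of any ancestor that went left has not yet executed. Every such contributing call has depth strictly less than $h$, so the induction hypothesis together with the monotonicity $\dist_{G^{j'}}(s,u_j)\le \dist_{G^\gamma}(s,u_j)$ (using $j'\ge\beta\ge\gamma$) bounds every contribution by $(1+\epsp)^h\cdot \dist_{G^\gamma}(s,u_j)$. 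The maximum $\dgb(u_j)$ inherits the same bound, and chaining the inequalities gives
\[ d_\gamma(t)\;\le\; (1+\epsp)^{h+1}\cdot\dist_{G^\gamma}(s,u_j)+\dist_{G^\gamma}(u_j,t)\;\le\;(1+\epsp)^{h+1}\cdot\dist_{G^\gamma}(s,t), \]
as required.
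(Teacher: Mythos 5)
Your proof is correct and follows essentially the same inductive strategy as the paper: observing $s\notin\VV$, decomposing a shortest $s\to t$ path at the last vertex outside $\VV$, using $\dinit$ to bridge into $H$, and applying the induction hypothesis and monotonicity $\dist_{G^{j'}}(s,\cdot)\le\dist_{G^\gamma}(s,\cdot)$ to bound $\dgb$. The only notable difference is that you spell out explicitly why any estimate in $\Res$ at an index $\ge\beta$ must come from an ancestor (or initialization), a point the paper uses but leaves implicit; this is a legitimate and welcome strengthening of the exposition rather than a new idea.
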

\begin{proof}
The proof is by induction on $h$. Assume that the statement holds for all depths smaller than~$h$.
First observe that for every vertex $u \not\in \VV$, the current value $\dgb(u)$ is a  $(1 + \epsp)^{h}$\nobreakdash-approximation of $\dist_{G^j}(s,u)$ for some $j \ge \beta$.
Indeed, if $\dgb(u)$ was inserted to $\Res$ during initialization, it is a $(1 + \epsp)^0$-approximation of $\dist_{G^\Delta}(s,u)$.
Otherwise, it was inserted by some ancestor call at depth $h'$ smaller than $h$.
Therefore, by the induction hypothesis we have 
\begin{equation}
  \dla(u) \le (1 + \epsp) \dgb(u) \le (1 + \epsp)^{h + 1} \dist_{G^j}(s,u) \le (1 + \epsp)^{h + 1}\dist_{G^\gamma}(s,u), \label{outside VV equation}
\end{equation}
and since $\dla(u) \ge \dist_{G^\gamma}(s,u)$, indeed $\dla(u)$ is a $(1 + \epsp)^{h + 1}$-approximation of $\dist_{G^\gamma}(s,u)$.

Now let $t \in \VV$ and take any simple shortest path $P_{st}$ from $s$ to $t$ in $G^\gamma$.
Let $e=uv$ be the last edge on $P_{st}$,  such that $u \not\in \VV$ and $v \in \VV$, the existence of which follows from $s\notin \VV$.
By definition, $\dinit(v) \le \dla(u) + w(e)$.
Let $P_{vt}$ denote the suffix of the path $P_{st}$ going from vertex $v$ to $t$.
The length of $P_{vt}$ is equal to $\dist_{G^\gamma}(v, t)$ as it must itself be a shortest path.
Observe that since $P_{vt}$ consists of vertices from the set $\VV$, it exists in the graph $H$, implying $\dist_H(v, t) \le \dist_{G^\gamma}(v, t)$.
Putting everything together we get
\eq{
  \dist_H(f, t) &\le \dist_H(f, v) + \dist_H(v, t)\\ 
  &\le \dinit(v) + \dist_{G^\gamma}(v, t) \\
  &\le \dla(u) + w(e) + \dist_{G^\gamma}(v, t) \\
  &\le (1 + \epsp)^{h + 1}\dist_{G^\gamma}(s, u) + w(e) + \dist_{G^\gamma}(v, t) \\
  &\le (1 + \epsp)^{h + 1}\left(\dist_{G^\gamma}\left(s, u\right) + w\left(e\right) + \dist_{G^\gamma}\left(v, t\right)\right) \\
  &= (1 + \epsp)^{h + 1}\dist_{G^\gamma}(s, t).
}

The lower bound $\dist_H(f, t) \ge \dist_{G^\gamma}(s, t)$ can be proven similarly.
Let $P_{ft}$ be any simple shortest path from $f$ to $t$ in $H$ and let $v$ be vertex following $f$ on $P_{ft}$.
By definition we have $\dist_H(f, v) = \dinit(v) = \dla(u) + w(e)$ for some $e = uv \in G^\gamma$.
We know that the remainder of the path $P_{ft}$ (from $v$ onwards) is contained in $G^\gamma$, thus $\dist_{G^\gamma}(v, t) \le \dist_H(v, t)$.
We thus get
\eq{
  \dist_H(f, t) &= \dist_H(f, v) + \dist_H(v, t) \\
  &\ge \dla(u) + w(e) + \dist_{G^\gamma}(v, t) \\
  &\ge \dist_{G^\gamma}(s, u) + w(e) + \dist_{G^\gamma}(v, t) \\
  &\ge \dist_{G^\gamma}(s, t),
}
which completes the proof.
\end{proof}

Now observe that the depth of every (not immediately ending) recursive call is at most $\log_2(\Delta)$ since the interval $[\alpha,\beta]$ is halved in children calls.
By putting $\epsp:=\eps/(2\log_2(\Delta)+2)$, the approximation factor for every estimate stored in $\Res$ is $1 + \eps$ by reasoning similar to \Cref{remark: epsp fix}.

\begin{restatable}[]{corollary}{restateOfflineCorollary}
For any query $(u, j)$, $d_{\le j}(u)$ is a $(1 + \eps)$-approximate estimate of $\dist_{G^j}(s,u)$.
\end{restatable}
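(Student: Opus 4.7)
The argument combines \Cref{theorem:error bound} with a tracing argument in the recursion tree of $\textproc{Search}$. The lower bound $d_{\le j}(u) \ge \dist_{G^j}(s,u)$ is immediate: every estimate stored in $\Res$ for some index $i \le j$ is at least $\dist_{G^i}(s,u)$ by \Cref{theorem:error bound}, and the incremental nature of the updates gives $\dist_{G^j}(s,u) \le \dist_{G^i}(s,u)$, so $d_{\le j}(u)$, being a minimum of such estimates, is at least $\dist_{G^j}(s,u)$.

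For the upper bound, I would trace the unique chain of recursive calls whose input interval $[\alpha,\beta]$ contains $j$. It starts at the root $\textproc{Search}(0,\Delta)$, halves the interval at each level (since the children are $(\alpha,\gamma-1)$ and $(\gamma+1,\beta)$), and terminates at a unique call $C_j$ with $\gamma = j$ at some depth $h_j \le \log_2(\Delta)$. At $C_j$ there are two cases. If $u \in \VV$, then the Dijkstra output $d_j(u)$ is inserted into $\Res$ with index exactly $j$, and by \Cref{theorem:error bound} it satisfies $d_j(u) \le (1+\epsp)^{h_j+1}\dist_{G^j}(s,u)$. If $u \notin \VV$, then the definition in line~\ref{line:VV} gives $\dla(u) \le (1+\epsp)\dgb(u)$, and precisely the chain of inequalities used for the ``outside-$\VV$'' estimate in the proof of \Cref{theorem:error bound} (specialized to $\gamma = j$) yields $\dla(u) \le (1+\epsp)^{h_j+1}\dist_{G^j}(s,u)$. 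Crucially, $\dla(u)$ at the time of $C_j$ is itself a concrete estimate stored in $\Res$ for some index $i \le \alpha_{C_j} \le j$.

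In both cases, the identified witness has index at most $j$, so since $\Res$ only grows throughout the algorithm, $d_{\le j}(u)$ at query time is a minimum over a superset of the estimates available at the time of $C_j$ and is therefore no larger than the witness. Plugging in $\epsp = \eps/(2\log_2(\Delta)+2)$ and applying the inequality $(1+x/(2k))^k \le 1+x$ used in \Cref{remark: epsp fix} converts $(1+\epsp)^{h_j+1} \le (1+\epsp)^{\log_2(\Delta)+1}$ to $1+\eps$, finishing the upper bound.

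The only subtlety, rather than any hard calculation, is to verify that $C_j$ is well-defined for every $j \in \{0,\dots,\Delta\}$. The interval-halving step handles all interior $j$, and the boundaries $j = 0$ and $j = \Delta$ are in any case directly covered by the initialization that stores the exact values $\dist_{G^0}(s,\cdot)$ and $\dist_{G^\Delta}(s,\cdot)$ in $\Res$.
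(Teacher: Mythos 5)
Your proof is correct and follows essentially the same approach as the paper: identify the unique call $C_j$ with $\gamma = j$, split into the $u \in \VV$ / $u \notin \VV$ cases via \Cref{theorem:error bound} and the test in line~\ref{line:VV}, observe the witness estimate has timestamp $\le j$, and fold the $(1+\epsp)^{h_j+1}$ factor into $1+\eps$; the lower bound via monotonicity of distances under incremental updates is also the same. Your extra remark that $j=0$ and $j=\Delta$ are covered by initialization is harmless but unnecessary, since the recursion already assigns a unique $\gamma=j$ to every $j\in\{0,\dots,\Delta\}$.
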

\begin{proof}
Let $\textproc{Search}(\alpha, \beta)$ be the call for which $\gamma = \left\lfloor\left(\alpha + \beta\right) / 2\right\rfloor$ is equal to $j$ -- observe that there must be exactly one such call in the recursion tree.
If $u \in \VV$ for that call, we know that $\dg(u)$ is stored in $\Res$ and by \Cref{theorem:error bound} we get
$d_{\le j}(u) \le (1 + \epsp)^{h + 1}\dist_{G^j}(s,u)$, where $h$ denotes the depth of the call.
If $u \not \in \VV$, we had $\dla(u) \le (1 + \epsp)\dgb(u)$ at the beginning of the call, implying that the answer is also at most
$(1 + \epsp)^{h + 1}\dist_{G^j}(s,u)$ analogously as in \eqref{outside VV equation} from the proof of \Cref{theorem:error bound}.
We get the lower bound $d_{\le j}(u) \ge \dist_{G^j}(s,u)$ simply because $d_{\le j}(u) \ge \dist_{G^i}(s,u)$ for some $i \le j$ and of course $\dist_{G^i}(s,u) \ge \dist_{G^j}(s,u)$.
Since $(1 + \epsp)^{h + 1} \le 1 + \eps$, the proof is complete.
\end{proof}

\subsection{Running time analysis}

We first describe some lower-level details of the implementation.
Starting off, the set $\Res$ can be implemented as a collection of balanced binary search trees.
For each vertex $u$, we have a separate balanced BST that stores the set of approximations of $\dist_{G^j}(s,u)$ for different timestamps $j$, sorted in order of increasing $j$.
We can thus perform insertions and queries for the largest/smallest approximation stored for a given range of timestamps in time logarithmic to the number of stored elements.
Later on, we prove that the size of each stored BST is $\O(\log(nW) \log^2(\Delta) / \eps)$.
Therefore, all high-level operations involving $\Res$ take $\O(\log(\log(nW)\log(\Delta)/\eps))$ time.

Secondly, observe that the bottleneck of the cost of a call \textproc{Search} (\Cref{alg:offline}) is running Dijkstra's algorithm in line \ref{line:Dijkstra}.
First, we need to construct the set $\VV$ efficiently in line \ref{line:VV}.
It is too costly to do that naively by iterating through all vertices.
Thankfully, observe that any vertex $v$ may belong to the set $\VV$ only if it belonged to the set $\VV$ of the parent procedure call, if such a call exists.
Therefore, we could use a slightly modified algorithm, where we pass the set $\VV$ to the children calls as $X_0$, so that when constructing respective sets $X$ in children calls, we only iterate through $X_0$.
Then, the total cost of computing all the sets $\VV$ can be easily seen to be proportional to their total size times the cost of checking the condition $\dla(u) > (1 + \epsp)\dgb(u)$.

The key claim that we prove to argue about the efficiency of the proposed preprocessing is that each vertex may belong to the set $\VV$ of only $\O(\log(nW) \log(\Delta) / \epsp)$ recursive calls of \textproc{Search}.

Fix some vertex $u\in V$.
If $u$ belongs to the set $\VV$ inside some $\textproc{Search}$ call, we say that the call is \emph{costly} for $u$.
Our strategy is to bound the number of costly calls at any fixed depth of the recursion.
Formally, we say that the two recursive $\textproc{Search}$ calls are the \emph{$1$-descendants} of the call that invoked them.
Inductively, for any integer $h > 1$, the \emph{$h$-descendants} of some $\textproc{Search}$ call are the $(h - 1)$-descendants of its $1$-descendants.
For convenience, we also say that any call is its own $0$-descendant.
We can now finally state our main lemma in this section.

\begin{restatable}[]{lemma}{restateOfflineComplexity}
	For any $h \ge 0$ and any call $\textproc{Search}(\alpha, \beta)$ that was invoked at some point of the algorithm, the number of its costly $h$-descendants is at most
  \begin{enumerate}[label=${\arabic*}^\circ)$,ref=${\arabic*}^\circ$]
		\item $0$ if $\dla(u) = 0$, \label{case:1}
    \item $1 + \log_{1 + \epsp}(\dla(u))$ if $\dla(u) > 0$ and $\dgb(u) = 0$, \label{case:2}
    \item $\log_{1 + \epsp}\left(\frac{\dla(u)}{\dgb(u)}\right)$ if $\dla(u) > 0$ and $\dgb(u) > 0$. \label{case:3}
	\end{enumerate}
\end{restatable}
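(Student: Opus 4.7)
The plan is to proceed by induction on $h$. Let $f(D,E)$ denote the bound claimed by the lemma in the three cases. Two convenient observations: first, estimates maintained by the algorithm live in $\{0\}\cup[1,nW]$, so $\dla(u)>0$ forces $\dla(u)\ge 1$ and hence the Case~2 bound is $\ge 1$; second, under the convention of treating $\dgb(u)=0$ as $\dgb(u)=1/(1+\epsp)$, both Cases~2 and~3 become instances of the uniform formula $f(D,E)=\log_{1+\epsp}(D/E)$, which is non-decreasing in $D$ on $[1,\infty)$ and non-increasing in $E$. The base case $h=0$ then follows directly from the definition of $\VV$: the unique $0$-descendant is the call itself, which is costly iff $\dla(u)>(1+\epsp)\dgb(u)$. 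Case~1 is vacuous (the strict inequality cannot hold with $\dla(u)=0$); in Case~2 the count is $1$ and the bound $1+\log_{1+\epsp}(\dla(u))\ge 1$; in Case~3 the costliness condition gives $f(D,E)>1$.

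For the inductive step, I would first appeal to the remark noted earlier that $\textproc{Search}$ can be implemented to pass its $\VV$-set to the two children, so the $\VV$-sets along any root-to-leaf path are nested. Hence if $u\notin\VV^{\mathrm{parent}}$ then $u$ belongs to no descendant's $\VV$ and the number of costly $h$-descendants is $0\le f(D,E)$. Assume therefore $u\in\VV^{\mathrm{parent}}$, so the parent inserts $d_\gamma(u)$ into $\Res$. Let $D,E$ denote $\dla(u),\dgb(u)$ at the start of the parent call, and $D_L,E_L,D_R,E_R$ the analogous values at the start of the left and right children. A short inspection of which estimates are inserted into $\Res$ in between yields the key relations
\[
D_L=D,\qquad E_R=E,\qquad E_L\ge\max(E,d_\gamma(u)),\qquad D_R\le d_\gamma(u),
\]
where the first two hold because no estimate with $j\le\alpha$ (resp.\ $j\ge\beta$) is inserted by either the parent or the left child, and the latter two hold because $d_\gamma(u)$ lies in the pool from which both $E_L$ and $D_R$ are computed.

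Combining the inductive hypothesis with the monotonicity of $f$ then reduces the claim to the single inequality
\[
f\bigl(D,\max(E,d_\gamma(u))\bigr)+f\bigl(d_\gamma(u),E\bigr)\le f(D,E),
\]
which in the uniform form expands to $\log_{1+\epsp}\bigl(D\cdot d_\gamma(u)\,/\,(\max(E,d_\gamma(u))\cdot E)\bigr)\le\log_{1+\epsp}(D/E)$ and thus to the trivial identity $d_\gamma(u)\le\max(E,d_\gamma(u))$. The main obstacle I anticipate is precisely the boundary bookkeeping around this telescoping: verifying that the reduction and the associated monotonicity of $f$ survive the sub-cases where some of $D,E,D_R,d_\gamma(u)$ vanish. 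Concretely, I would separately handle (i) the case $D_R=0$, where $f(D_R,E_R)=0$ by convention and the bound collapses to $f(D,E_L)\le f(D,E)$ via $E_L\ge E$ and monotonicity of $f$ in the second argument, and (ii) the case $d_\gamma(u)=0$, where the second summand on the left is $0$ and the first equals $f(D,E)$ since $\max(E,0)=E$; the remaining sub-cases reduce to the uniform log computation above.
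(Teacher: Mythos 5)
Your proof follows essentially the same route as the paper's: induction on $h$ with the base case $h=0$ checked directly, and the inductive step reduced to the chain $\dlg(u)\le\dg(u)\le\dgg(u)$ (in your notation $D_R\le d_\gamma(u)$ and $E_L\ge\max(E,d_\gamma(u))$), after which a telescoping of $\log_{1+\epsp}$-ratios closes the argument. The one stylistic difference is your convention $\dgb(u)=0\mapsto 1/(1+\epsp)$, which merges Cases~2 and~3 into the single monotone functional $f(D,E)=\log_{1+\epsp}(D/E)$ and replaces the paper's four explicit sub-cases ($\dg(u)=0$ or $\dg(u)>0$ within each of Cases~2 and~3) by one inequality plus a short boundary discussion; this is a genuine streamlining of the bookkeeping but not a different idea.

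Two small remarks. First, your claim $D_L=D$ quietly uses $\gamma>\alpha$; when $\gamma=\alpha$ (i.e., $\beta\le\alpha+1$) the left child's range is empty and contributes nothing, so this is benign but worth a sentence. Second, your reduction (like the paper's) silently uses that $f$ evaluated at a call where the root is not costly is nonnegative; both proofs dispose of the ``not costly'' case by asserting the count is $0$ and implicitly assuming the bound is $\ge 0$. You inherit this implicit assumption from the paper rather than introduce a new gap, but if you want your write-up self-contained you should state it explicitly (e.g., observe that $\dla(u)\ge\dgb(u)$ holds at the root call $\textproc{Search}(0,\Delta)$ by monotonicity of exact distances, and argue it is preserved, or restrict the inductive hypothesis to calls where it holds).
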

\begin{proof}
	\newcommand{\CASE}[1]{\ref{case:#1}}
	We say that the call $\textproc{Search}(\alpha, \beta)$ is the \emph{root} call.
	Case \CASE{1} is obvious: if $\dla(u) = 0$, then the root call is not costly and hence none of its descendants are.

	We prove cases \CASE{2} and \CASE{3} by induction on $h$.
	First, consider $h = 0$.
	If the root falls in case \CASE{2}, then the number of costly $0$-descendants is at most $1$ and the claimed bound is $1 + \log_{1 + \epsp} \dla(u) \ge 1$, since $\dla(u) \ge 1$ --- we use the fact that the edge weights are in $\set{0} \cup [1, W]$.
	If the root falls in case \CASE{3} and is costly, then $\dla(u) > (1 + \epsp)\dgb(u)$ and the bound is $\log_{1 + \epsp} ( \dla(u)/ \dgb(u)) > 1$, which completes the induction base.

  Now assume $h\geq 1$ and that the desired bounds hold for $(h - 1)$-descendants of every $\textproc{Search}$ root call.
  If the root is a leaf call, there is nothing to prove.
	Otherwise, let $\textproc{Search}(\alpha, \gamma - 1)$ and $\textproc{Search}(\gamma + 1, \beta)$ be the $1$-descendants of the root call.
	Also assume that the root is costly, since in the other case, we also have no costly descendants.
	Before the recursive calls, the root computes the estimate $\dg(u)$ and stores it in $\Res$.
	Let $\dgg(u)$ denote the largest estimate of $\dist_{G^j}(s,u)$ stored in $\Res$ for any $j \ge \gamma$ when $\textproc{Search}(\alpha, \gamma - 1)$ is called.
  Similarly,
	let $\dlg(u)$ denote the smallest estimate of $\dist_{G^j}(s,u)$ stored in $\Res$ for any $j \le \gamma$ when $\textproc{Search}(\gamma + 1, \beta)$ is called.
	Observe that we have $\dlg(u) \le \dg(u) \le \dgg(u)$.
	We now have to consider some cases.
	If the root falls into case \CASE{2}, we have two sub-cases:
  \begin{enumerate}[label=$2.{\arabic*}^\circ)$]
		\item $\dg(u) = 0$.
			In that case, the call $\textproc{Search}(\alpha, \gamma - 1)$ must fall into case \CASE{2} and \linebreak $\textproc{Search}(\gamma + 1, \beta)$ into case \CASE{1}.
			The total number of their costly $(h - 1)$-descendants is by induction $1 + \log_{1 + \epsp} \dla(u)$, which bounds the number of costly $h$-descendants of the root.
		\item $\dg(u) > 0$.
			In that case, $\textproc{Search}(\alpha, \gamma - 1)$ falls into case \CASE{3} and $\textproc{Search}(\gamma + 1, \beta)$ into either case \CASE{1} or \CASE{2}.
      Similarly, by $\dlg(u)\leq \dgg(u)$, we get the bound
      \[ \log_{1 + \epsp}\frac{\dla(u)}{\dgg(u)} + 1 + \log_{1 + \epsp}(\dlg(u)) \le 1 + \log_{1 + \epsp}(\dla(u)).\]
	\end{enumerate}
	If the root falls into case \CASE{3}, we also have two sub-cases:
  \begin{enumerate}[label=$3.{\arabic*}^\circ)$]
		\item $\dg(u) = 0$.
			Then, the call $\textproc{Search}(\alpha, \gamma - 1)$ falls into case \CASE{3}, since $\dgg(u) \ge \dgb(u) > 0$, and $\textproc{Search}(\gamma + 1, \beta)$ into case \CASE{1}.
			We get the bound
      $\log_{1 + \epsp} \frac{\dla(u)}{\dgg(u)} \le \log_{1 + \epsp} \frac{\dla(u)}{\dgb(u)}$, as desired.
		\item $\dg(u) > 0$.
			In that case, the call $\textproc{Search}(\alpha, \gamma - 1)$ falls into case \CASE{3}, and $\textproc{Search}(\gamma + 1, \beta)$ into either case \CASE{1} -- where we again get 
			$\log_{1 + \epsp} (\dla(u)/\dgg(u)) \le \log_{1 + \epsp} (\dla(u)/\dgb(u))$ --
			or case \CASE{3}, where the number of costly $h$-descendants is bounded by:
			\[ \log_{1 + \epsp} \frac{\dla(u)}{\dgg(u)} + \log_{1 + \epsp} \frac{\dlg(u)}{\dgb(u)} \le \log_{1 + \epsp} \frac{\dla(u)}{\dgb(u)}. \]
	\end{enumerate}
\end{proof}

Applying the above theorem to the call $\textproc{Search}(0, \Delta)$, we can bound the total number of costly calls for a given vertex $u$ by
$\left(1 + \log_{1 + \epsp}\left(nW\right)\right) \cdot \log_2 \Delta$, which after substituting 
$\epsp=\Theta(\eps/\log\Delta)$ is 
$\O(\log(nW) \log^2(\Delta) / \eps)$.
The total cost of the $\textproc{Dijkstra}$ invocations (line \ref{line:Dijkstra} of \Cref{alg:offline}), combined with the total number of $\textproc{Search}$ calls, is
$\O(\Delta \log(\Delta) + m \log(n)\log(nW) \log^2(\Delta) / \eps).$

Recall from~\Cref{sec:all-pairs} that for large values of $\Delta = \Omega(m \log(W)/\eps)$, we can reduce our problem to the case where we have at most $\O(m \log(W)/\eps)$ weight decreases in total by filtering the updates in $\O(\Delta)$ time at the start.
The time complexity thus becomes
$\O(m \log(n)\log(nW)\log^2(\Delta) / \eps + \Delta)$.

\section{Randomized reset}\label{a:reset}
This section is devoted to proving~\Cref{t:randomized-reset}.
Below we state the procedure $\textsc{Randomized-Reset}$ that achieves the goal if called with parameter $\bal$.
After the procedure completes, we simply reset the rank offset $\roff$ to $\bal$.
Provided that the procedure makes every vertex $\bal$-certified, any particular vertex $v$ is clearly
$(\roff+\rank(v))$-certified afterwards.

\begin{algorithm}[h!]
  \caption{\textsc{Randomized-Reset}($\lambda$)}\label{alg:batch-update}
  \begin{algorithmic}[1]
    \State $Z:=\emptyset$
    \For{$i=1,\ldots,\lceil c\cdot (25\ell/\bal)\log{n}\rceil$}
      \State sample an integer $j\in \{0,1,\ldots,\lfloor \bal/8\rfloor\}$ uniformly at random
      \State $Y:=\{v:(1+\epsp)^{j\cdot 8\ell/\bal}\leq d(v)\leq (1+\epsp)^{(j+2)\cdot 8\ell/\bal}\}$ \Comment{compute $Y$ only if $\deg(Y)=\O(m\ell/\bal^2)$}
      \If{$\deg(Y)\leq 400m\ell/\bal^2$}
        \State $Z:=Z\cup Y$
      \EndIf
    \EndFor
    \State \Call{Propagate}{$Z$}
  \end{algorithmic}
\end{algorithm}
Note that in $\textsc{Randomized-Reset}$,
the sum of degrees of the vertices in the sampled set $Z$ is $\O(m\ell^2\log{n}/\bal^3)$.
Hence the total cost of $\PD(Z)$ that cannot be charged to estimate drops
is $\O(m\ell^2\log^2(n)/\bal^3)=\O(m\log^2(nW)\log^2(n)/(\bal^3\epsp^2))$.

The following lemma proves the correctness of $\textsc{Randomized-Reset}$.
\begin{lemma}\label{l:batch-insert-correct}
  After $\textsc{Randomized-Reset}(\lambda)$ completes, every $v\in V$ is $\lambda$-certified
  w.h.p.
\end{lemma}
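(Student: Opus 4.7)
The plan is to show that, with high probability over the randomness in \textsc{Randomized-Reset}, for every $u\in V$ and every path $P=u\to v$ in $G[V\setminus\{s\}]$, the post-procedure estimates satisfy $d(v)\leq (1+\epsp)^{\lambda}(d(u)+\wei(P))$, which is precisely $\lambda$-certification. I will first argue that the random sampling succeeds in ``covering'' every admissible band, and then argue that this coverage is enough to force the certification bound along every path.

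First, I would analyze the sampling. Let $T=8\ell/\lambda$ and classify each doubled band $Y_j$ (for $j\in\{0,\ldots,\lfloor\lambda/8\rfloor\}$) as \emph{sparse} if it satisfies the acceptance condition $\deg(Y_j)\leq 400m\ell/\lambda^2$, and \emph{dense} otherwise. Since every vertex lies in at most two doubled bands, $\sum_j \deg(Y_j)\leq 4m$, and so by a Markov-type counting argument the number of dense bands is $O(\lambda^2/\ell)$ out of the $\Theta(\lambda)$ total. In each iteration, a given sparse band is chosen with probability $\Theta(1/\lambda)$; across $\Theta((\ell/\lambda)\log n)$ iterations, the probability that a specific sparse band is never chosen is at most $\exp(-\Theta(\ell\log n/\lambda^2))$. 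Choosing the algorithm's constant $c$ sufficiently large and union-bounding over the $\Theta(\lambda)$ bands yields that, with probability at least $1-n^{-\Omega(1)}$, every sparse band is fully absorbed into $Z$.

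Second, I would condition on this coverage event and show that every $u$ becomes $\lambda$-certified. Fix $u$ and any path $P=u\to v$ in $G[V\setminus\{s\}]$. After $\PD(Z)$, \Cref{lem:no_slack} implies that any edge with both endpoints in $\Vtouch\supseteq Z$ contributes no multiplicative slack, while \Cref{slack-inv} bounds the slack of each remaining edge by a factor of $(1+\epsp)$. Under the coverage event, any vertex outside $\Vtouch$ must belong to a dense band. The core of the argument is then a charging scheme mapping the ``unrelaxed'' edges along $P$ to the traversals of $P$ through dense bands; the bounded number of such bands ($O(\lambda^2/\ell)$) combined with their log-width ($T=8\ell/\lambda$) is designed to cap the accumulated slack factor on $P$ at $(1+\epsp)^{\lambda}$.

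The main obstacle will be this charging argument: carefully counting how many unrelaxed edges a path may contain, given that the path's estimate profile can enter and exit dense bands, and that \Cref{slack-inv} still permits estimates to change substantially in a single hop when edge weights are large relative to the current estimate. It is precisely this counting that motivates the particular constants in the threshold $400m\ell/\lambda^2$ and the iteration factor $25\ell/\lambda$; once these constants align, the $\lambda$-certification bound follows edge by edge along $P$. Since the high-probability success event from the first step is global and does not depend on any particular source $u$, no additional union bound over $u\in V$ is required.
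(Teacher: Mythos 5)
Your high-level plan — sample low-degree bands, union them into $Z$, and exploit the fact that $\PD(Z)$ relaxes all edges internal to $\Vtouch$ (\Cref{lem:no_slack}) while \Cref{slack-inv} bounds every remaining edge — correctly identifies the available tools, but your argument diverges from the paper's at exactly the point where the probability calculation matters, and the divergence is fatal.

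Your first step claims that w.h.p.\ \emph{every} sparse band is absorbed into $Z$. This is false in the regime the theorem is stated for. There are $\Theta(\lambda)$ bands but only $\Theta((\ell/\lambda)\log n)$ samples. As you yourself compute, a fixed sparse band is missed with probability roughly $\exp(-\Theta(\ell\log n/\lambda^2))$; a union bound over $\Theta(\lambda)$ bands is only $n^{-\Omega(1)}$ when $\lambda = O(\sqrt{\ell})$, no matter how large the constant $c$. But \Cref{t:randomized-reset} allows $\lambda$ up to $\ell$, and in the APSP application of \Cref{sec:all-pairs} one takes $\lambda = \roff^* = \Theta(p\log m)$ with $\ell = \Theta(p\log m\log(nW)/\eps)$, so $\ell/\lambda^2 = \Theta(\log(nW)/(\eps p\log m)) \to 0$ for any polynomially large $p$. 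In that regime the number of samples is even smaller than the number of bands, so full coverage is impossible, not just unlikely.

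The paper avoids this by not attempting global coverage at all. Its proof is by contradiction: assume some vertex is not $\lambda$-certified and take a \emph{minimal} violating path $P$. Breaking $P$ into segments $P_1,\dots,P_k$ via the greedy $(1+\epsp)$-slack landmarks $u_0,\dots,u_k$ forces $k > \lambda$; a counting argument then shows that at least a $\lambda/25\ell$-fraction of the intervals in $\mathcal{I}$ are simultaneously low-degree and contain some bracket $[d'(u_{i-1}),d'(u_{i+1})]$ from $P$. Hitting \emph{one} such good interval — which $\Theta((\ell/\lambda)\log n)$ samples achieve w.h.p.\ at a probability $\geq \lambda/25\ell$ per draw — already gives the contradiction: $\PD(Z)$ then certifies $d(t)\leq d(u_{i^*-1})+\wei(P_{i^*}\cdots P_k)$, and minimality of $P$ bounds $d(u_{i^*-1})$, yielding $d(t)\leq (1+\epsp)^\lambda(d(u)+\wei(P))$. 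So the randomized guarantee needed is exponentially weaker than yours, and the minimality of $P$ does the structural work you are trying to do with the charging scheme.

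This also explains why your second step stalls. Without the minimal-counterexample framing there is no single bracket $[d'(u_{i^*-1}),d'(u_{i^*+1})]$ to work inside, so you would have to control the accumulated slack across \emph{every} excursion of an arbitrary path through dense bands; but a path can re-enter the same dense band many times, and dense bands, while few, can contain many vertices, so the number of unrelaxed edges along a path is not obviously bounded by the number of dense bands. You flag the charging argument as the ``main obstacle,'' and indeed it is: it is exactly the part that the paper sidesteps by arguing about a single, minimal $P$ rather than all paths at once.
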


\begin{proof}
If $P$ is a simple path, we will write $P[x\to y]$ to denote the subpath of $P$ that is an $x\to y$ path.
  Suppose the lemma does not hold and let $u\in V$ be a vertex that is not $\lambda$-certified.
  There thus exists a shortest path $P=u\to t$ in $G[V\setminus \{s\}]$ such that
  $d(t)>(1+\epsp)^\bal(d(u)+\wei(P))$ and $|P|$ is minimal.

  For any~$v$, let $d'(v)$ be the estimate $d(v)$ before running $\textsc{Randomized-Reset}$. 
  By~\Cref{slack-inv}, we have $d'(v)\leq (1+\epsp)(d'(u)+\wei(e))$ for any edge $uv=e\in E$.
  Since the estimates only decrease, we also have $d'(t)>(1+\epsp)^\bal(d(u)+\wei(P))$.

  Define vertices $u=u_0,u_1,\ldots,u_k=t$ on the path $P$ as follows. Let $u_0,\ldots,u_k$
all lie on~$P$ in this order and for all $i=1,\ldots,k$, $u_i$ is the \emph{last} vertex on~$P$ following $u_{i-1}$ such that the subpath $P_i=u_{i-1}\to u_i$ satisfies
  \[ d'(u_i)\leq (1+\epsp)(d'(u_{i-1})+w(P_i)). \]
The vertex $u_i$ is well-defined by~\Cref{slack-inv}.
Note that one can chain the above inequalities for $i=1,2,\ldots,k$ and get
  \[d'(t)=d'(u_k)\leq (1+\epsp)^k(d'(u_0)+w(P))=(1+\epsp)^k\cdot \dist(s,t).\]
  Hence, by our assumption, we conclude $k>\bal$.
  Observe that, by the definition of $u_{i}$ and $u_{i+1}$, for any $1\leq i\leq k-1$, we have:
  \begin{equation}\label{eq:gap2}
    d'(u_{i+1})>(1+\epsp)(d'(u_{i-1})+w(P_iP_{i+1}))\geq (1+\epsp)d'(u_{i-1})
  \end{equation}
  and
  \begin{align*}
    d'(u_{i+1})&\leq (1+\epsp)(d'(u_i)+\wei(P_{i+1}))\\
               &\leq (1+\epsp)((1+\epsp)(d'(u_{i-1})+\wei(P_{i}))+\wei(P_{i+1}))\\
               &\leq (1+\epsp)^2(d'(u_{i-1})+\wei(P_{i}P_{i+1})).
  \end{align*}

  Note that there exists at most $\frac{1}{3}\bal$ indices $i$ such that
  \[ d'(u_{i-1})+\wei(P_iP_{i+1})>(1+\epsp)^{6\ell/\bal}\cdot d'(u_{i-1}), \]
  since otherwise we would be able to chain the above inequalities for some
  $\frac{1}{6}\bal$ distinct indices $i$ using~\eqref{eq:gap2}
  and obtain $d'(u_l)>(1+\epsp)^{\ell}$ for some $l$,
  which is impossible by $d'(u_l)\leq nW$.
  
  Let thus $X$, $|X|\geq k-\bal/3\geq \frac{2}{3}\bal$, be the set of indices $i=2,\ldots,k-1$ such that
  \[ d'(u_{i-1})+\wei(P_iP_{i+1})\leq (1+\epsp)^{6\ell/\bal}\cdot d'(u_{i-1}). \]
  For each $i\in X$, we have:
  \[ 
  d'(u_{i+1})\leq (1+\epsp)^2(d'(u_{i-1})+\wei(P_{i}P_{i+1})) \leq
  (1+\epsp)^{6\ell/\bal+2}\cdot d'(u_{i-1})\leq (1+\epsp)^{8\ell/\bal}\cdot d'(u_{i-1}). \]
  
  Now consider a family of $\bal$ intervals $\mathcal{I}$ of the form
  \[[(1+\epsp)^{j\cdot 8\ell/\bal}, (1+\epsp)^{(j+2)\cdot 8\ell/\bal}],\]
  for integer values $j=0,\ldots,\lfloor \bal/8\rfloor$. Note that for $i\in X$, an interval
  \[ J_i := [d'(u_{i-1}),(1+\epsp)^{8\ell/\bal}\cdot d'(u_{i-1})]\supseteq [d'(u_{i-1}),d'(u_{i+1})] \]
  is contained in some interval from $\mathcal{I}$.
  
  On the other hand, observe that for any $I\in \mathcal{I}$, we have
  \[|I\cap d'(X)|\leq 64 \ell/\bal.\]
  Indeed, if $s\in X$ is minimum such that $d'(s)\in I$, then by chaining~\eqref{eq:gap2} for
  $i=s+1,s+3,\ldots$, we obtain:
  \[ d'(u_{s+2\cdot (2\cdot 8\lceil \ell/\bal\rceil-1) })>(1+\epsp)^{2\cdot 8\ell/\bal-1}\cdot d'(u_{s}),\]
  and thus for all $i\geq s+64\ell/\bal\geq s+32\lceil \ell/\bal\rceil =s+2\cdot 2\cdot 8\lceil\ell/\bal\rceil$ we have $d'(u_i)>(1+\epsp)^{2\cdot 8\ell/\bal}d'(u_s)\notin I$.
  We conclude that at least $(\bal/64\ell)|X|\geq \bal^2/100\ell$ distinct intervals from $\mathcal{I}$ contain some interval $J_i$, where $i\in X$.

  For $I\in \mathcal{I}$, define
  \[ \deg(I)=\sum_{v:d'(v)\in I}\deg_G(v). \]
  Since each value $d'(v)$ is contained in at most two intervals from $\mathcal{I}$, 
  we have \[\sum_{I\in\mathcal{I}} \deg(I)\leq 2m.\]
  Consequently, for at most $\bal^2/200\ell$ intervals $I$ from $\mathcal{I}$
  we have $\deg(I)\geq 400m\ell/\bal^2$.
  Since \linebreak $(\bal^2/200\ell)/(\bal/8)=\bal/25\ell$, it follows that at least an $\bal/25\ell$-fraction of intervals $I\in\mathcal{I}$ satisfy both:
  \begin{enumerate}
    \item $\deg(I)\leq 400m\ell/\bal^2$,
    \item There exists $2\leq i<k$ such that $[d'(u_{i-1}),d'(u_{i+1})]\subseteq I$.
  \end{enumerate}
  Hence, a subset $Z\subseteq V$ obtained in Algorithm~\ref{alg:batch-update} by unioning $c\cdot (25\ell/\bal)\log{n}$ uniformly sampled (small-degree) intervals from~$\mathcal{I}$, for some constant $c\geq 1$,
  has total degree $\O(m\ell^2\log(n)/\bal^3)$, and contains, 
  with high probability, all vertices with their respective $d'(\cdot)$ estimates in the interval
  $[d'(u_{i^*-1}),d'(u_{i^*+1})]$ for some $i^*\leq \bal$.
  
  To finish the proof, we will argue that assuming $Z$ has the desired properties (which is the case whp.),
  running $\PD(Z)$ ensures that
  \begin{equation}\label{eq:all-in-queue}
    d(t)\leq d(u_{i^*-1})+\wei(P_{i^*}\ldots P_k).
  \end{equation}
  To see that this is a contradiction, note that the $u\to u_{i^*-1}$ path $P_1\ldots P_{i^*-1}$ has less edges than
  $P$ and thus $d(u_{i^*-1})\leq (1+\epsp)^\bal\cdot (d(u)+\wei(P_1\ldots P_{i^*-1}))$
  by the minimality of $P$. Therefore:
  \begin{align*}
    d(t)&\leq d(u_{i^*-1})+\wei(P_{i^*}\ldots P_k)\\
    &\leq (1+\epsp)^\bal\cdot (d(u)+\wei(P_1\ldots P_{i^*-1}))+\wei(P_{i^*}\ldots P_k)\\
    &\leq (1+\epsp)^\bal \cdot (d(u)+\wei(P)).
  \end{align*}
  This contradicts our choice of $P$.

  Let $P':=P_{i^*}\ldots P_k=v_0v_1\ldots v_p$, where $v_0=u_{i^*-1}$, $v_p=t$ and $v_{y}=u_{i^*+1}$.
  To prove~\eqref{eq:all-in-queue}, 
  we will inductively show that for all $j=0,\ldots,p$, $v_j$ is inserted in the queue $Q$ at some point of the call $\PD(Z)$, or, in other words, $V(P')\subseteq \Vtouch$.
  By~\Cref{lem:no_slack}, it will follow
  that every edge of~$P'$ is relaxed afterwards, and thus~\eqref{eq:all-in-queue} holds.
  
  For $j=0$, the claim holds since $v_0=u_{i^*-1}\in Z$. Suppose $1\leq j<y$ and that the claim holds for all smaller $j$.
 First, note that $d'(v_j)\geq d'(v_0)$, since by the definition of $u_{i^*-1}$, we have
\begin{align*}
  d'(v_j)&>(1+\eps)(d'(u_{i^*-2})+\wei(P[u_{i^*-2}\to v_j]))\\
  &\geq (1+\eps)(d'(u_{i^*-2})+\wei(P[u_{i^*-2}\to u_{i^*-1}])\\
  &\geq d'(u_{i^*-1})
  =d'(v_0).
\end{align*}
 Moreover, if $d'(v_j)<d'(v_y)$, then $d'(v_j)$ is also in~$Q$ initially since all vertices with their $d'$ values in $[d'(v_0),d'(v_y)]$ are included in $Z$.
 We may thus assume $d'(v_j)\geq d'(v_y)>(1+\epsp)(d'(v_0)+\wei(P_{i^*}P_{i^*+1}))$.
 Since $j<y$, we have \[d'(v_0)+w(P_{i^*}P_{i^*+1})\geq d'(v_0)+w(P[v_0\to v_{j-1}])+\wei(v_{j-1}v_j).\]
 But by the inductive assumption $d(v_{j-1})\leq d'(v_0)+\wei(P[v_0\to v_{j-1}])$, which in turn implies that
\begin{align*}
  d'(v_j)&>(1+\epsp)(d'(v_0)+\wei(P[v_0\to v_{j-1}])+\wei(v_{j-1}v_j))\geq (1+\epsp)(d(v_{j-1})+\wei(v_{j-1}v_j))
\end{align*}
holds when $v_{j-1}$ is processed.
 As a result, $v_j$ will be pushed to $Q$ when processing the edge $v_{j-1}v_j$ or earlier.

 Now, suppose $j\geq y$. Recall that $d'(v_j)>(1+\epsp)(d'(v_0)+\wei(P[v_0\to v_j]))$. By the inductive assumption, $v_{j-1}$ is
 put into $Q$ at some point and $d(v_{j-1})\leq d'(v_0)+\wei(P[v_0\to v_{j-1}])$.
 As a result, when processing $v_{j-1}$, we have $d'(v_j)>(1+\epsp)(d(v_{j-1})+\wei(v_{j-1}v_j))$, which ensures
 $v_j$ is pushed to~$Q$ as well.
\end{proof}

\clearpage
\bibliographystyle{alpha}
\bibliography{references}

\end{document}